\def\a{\alpha}
\def\b{\beta}
\def\d{\delta}
\renewcommand{\epsilon}{\ve}
\def\ve{\varepsilon}
\def\l{\lambda}
\newcommand{\E}{\mbox{\bf E}}
\newcommand{\Var}{\mbox{\bf Var}}
\newcommand{\pr}[2][]{\mbox{Pr}\ifthenelse{\not\equal{}{#1}}{_{#1}}{}\!\left[#2\right]}
\newcommand{\dtv}{d_{\mathrm {TV}}}
\newtheorem{theorem}{Theorem}
\newtheorem{proposition}{Proposition}
\newtheorem{claim}{Claim}
\newtheorem{corollary}{Corollary}
\newtheorem{definition}{Definition}
\newcommand{\ignore}[1]{}
\newenvironment{prevproof}[2]{\noindent {\em {Proof of {#1}~\ref{#2}:}}}{$\hfill\qed$\vskip \belowdisplayskip}
\newcommand{\bg}[1]{\medskip\noindent{\bf #1}}
\definecolor{Red}{rgb}{1,0,0}
\newcommand{\oldbound}[1]{{}}
\title{Priv'IT: {\em Priv}ate and Sample Efficient {\em I}dentity {\em T}esting}
\author {
Bryan Cai \\
EECS, MIT\\
\tt{bcai@mit.edu}
\and
Constantinos Daskalakis \\
EECS and CSAIL, MIT \\
\tt{costis@csail.mit.edu}
\and
Gautam Kamath \\
EECS and CSAIL, MIT\\
\tt{g@csail.mit.edu}
}
\begin{document}
\maketitle
\begin{abstract}
We develop differentially private hypothesis testing methods for the small sample regime. Given a sample $\cal D$ from a categorical distribution $p$ over some domain $\Sigma$, an explicitly described distribution $q$ over $\Sigma$, some privacy parameter $\epsilon$, accuracy parameter $\alpha$, and requirements $\b_{\rm I}$ and  $\b_{\rm II}$ for the type I and type II errors of our test, the goal is to distinguish between $p=q$ and $\dtv(p,q) \geq \alpha$. 

We provide theoretical bounds for the sample size $|{\cal D}|$ so that our method both satisfies $(\epsilon,0)$-differential privacy, and guarantees $\b_{\rm I}$ and  $\b_{\rm II}$ type I and type II errors. We show that differential privacy may come for free in some regimes of parameters, and we always beat the sample complexity resulting from running the $\chi^2$-test with noisy counts, or standard approaches such as repetition for endowing non-private $\chi^2$-style statistics with differential privacy guarantees. We experimentally compare  the sample complexity of our method to that of recently proposed methods for private hypothesis testing~\cite{GaboardiLRV16,KiferR17}.
\end{abstract}

\section{Introduction} \label{sec:intro}

{\em Hypothesis testing} is the age-old problem of deciding whether observations from an unknown phenomenon $p$ conform to a model $q$. Often $p$ can be viewed as a distribution over some alphabet~$\Sigma$, and the goal is to determine, using samples from $p$, whether it is equal to some model distribution $q$ or not. This type of test is the lifeblood of the scientific method and has received tremendous study in statistics since its very beginnings. Naturally, the focus has been on minimizing the number of observations from the unknown distribution $p$ that are needed to determine, with confidence, whether $p=q$ or $p \neq q$.

In several fields of research and application, however, samples may contain sensitive information about individuals; consider for example, individuals participating in some clinical study of a  disease that carries  social stigma. It may thus be crucial to guarantee that operating on the samples needed to test a statistical hypothesis protects sensitive information about the samples. This is not at odds with the goal of hypothesis testing itself, since the latter is about verifying a property of the population $p$ from which the samples are drawn, and not of the samples themselves. 

Without care, however, sensitive information about the sample might actually be divulged by statistical processing that is improperly designed. As recently exhibited, for example, it may be possible to determine whether individuals participated in a study from data that would typically be published in genome-wide association studies~\cite{HomerSRDTMPSNC08}. Motivated in part by this realization, there has been increased recent interest in developing data sharing techniques which are private~\cite{JohnsonS13,UhlerSF13,YuFSU14,SimmonsSB16}. 

Protecting privacy when computing on data has been extensively studied in several fields ranging from statistics to diverse branches of computer science including algorithms, cryptography, database theory, and machine learning; see, e.g.,~\cite{Dalenius77,AdamW89,AgrawalA01,DinurN03,Dwork08,DworkR14} and their references. A notion of privacy proposed by theoretical computer scientists which has found a lot of traction is that of {\em differential privacy}~\cite{DworkMNS06}. Roughly speaking, it requires that the output of an algorithm on two neighboring datasets $D$ and $D'$ that differ in the value of one element be statistically close. For a formal definition see Section~\ref{sec:prelim}.

Our goal in this paper is to develop tools for privately performing statistical hypothesis testing. In particular, we are interested in studying the tradeoffs between statistical accuracy, power, significance, and privacy in the sample size. To be precise, given samples from a categorical distribution $p$ over some domain $\Sigma$, an explicitly described distribution $q$ over $\Sigma$, some privacy parameter $\epsilon$, accuracy parameter $\alpha$, and requirements $\b_{\rm I}$ and  $\b_{\rm II}$ for the type I and type II errors of our test, the goal is to distinguish between $p=q$ and $\dtv(p,q)\geq\alpha$. We want that the output of our test be $(\epsilon,0)$-differentially private, and that the probability we make a type I or type II error be $\b_{\rm I}$ and  $\b_{\rm II}$ respectively. Treating these as hard constraints, we want to {\em minimize the number of samples that we draw from $p$}. 

Notice that the {\em correctness} constraint on our test pertains to whether we draw the right conclusion about how $p$ compares to $q$, while the {\em privacy} constraint pertains to whether we respect the privacy of the samples that we draw from $p$. The pertinent question is how much the privacy constraint increases the number of samples that are needed to guarantee correctness. Our main result is that privacy may come for free in certain regimes of parameters, and has a mild cost for all regimes of parameters. 

To be precise, {\em without privacy constraints}, it is well known that identity testing can be performed from~$O({\sqrt{n} \over \alpha^2}\cdot \log {1 \over \b})$ samples, where $n$ is the size of $\Sigma$ and $\b=\min\{\b_{\rm I},\b_{\rm II}\}$, and that this is tight~\cite{BatuFFKRW01,Paninski08,ValiantV14,AcharyaDK15}. Our main theoretical result is that, {\em with privacy constraints}, the number of samples that are needed is
\begin{align}
\tilde O\left(\max \left\{\frac{\sqrt{n}}{\a^2}, \frac{\sqrt{n}}{\a^{3/2}\ve}, \frac{n^{1/3}}{\a^{5/3}\ve^{2/3}}\right\} \cdot \log (1/\b) \right). \label{eq:sample complexity}
\end{align}
Our statistical test is provided in Section~\ref{sec:mainub} where the above upper bound on the number of samples that it requires is proven as Theorem~\ref{thm:mainub}. Notice that privacy comes for free when the privacy requirement $\epsilon$ is $\Omega(\sqrt{\alpha})$ -- for example when $\epsilon=10\%$ and the required statistical accuracy is $3\%$.

The precise constants sitting in the $O(\cdot)$ notation of~Eq.~\eqref{eq:sample complexity} are given in the proof of Theorem~\ref{thm:mainub}. We experimentally verify the sample efficiency of our tests by comparing them to recently proposed private statistical tests~\cite{GaboardiLRV16,KiferR17}, discussed in more detail shortly. Fixing a differential privacy and type I, type II error constraints, we compare how many samples are required by our and their methods to distinguish between hypotheses that are $\alpha=0.1$ apart in total variation distance. 
We find that different algorithms are more efficient depending on the regime and properties desired by the analyst. 
Our experiments and further discussion of the tradeoffs are presented in Section~\ref{sec:experiments}.

\paragraph{Approach.} A standard approach to turn an algorithm differentially private is to use repetition. As already mentioned above, absent differential privacy constraints, statistical tests have been provided that use an optimal~$m=O({\sqrt{n} \over \alpha^2}\cdot \log {1 \over \b})$ number of samples. A trivial way to get $(\epsilon,0)$-differential privacy using such a non-private test is to create $O(1/\epsilon)$ datasets, each comprising $m$ samples from $p$, and run the non-private test on one of these datasets, chosen randomly. It is clear that changing the value of a single element in the combined dataset may only affect the output of the test with probability at most $\epsilon$. Thus the output is $(\epsilon,0$)-differentially private; see Section~\ref{sec:trivialub} for a proof. The issue with this approach is that the total number of samples that it draws is $m/\epsilon = O({\sqrt{n} \over \epsilon \alpha^2}\cdot \log {1 \over \b})$, which is higher than our target. See Corollary~\ref{cor:trivialub}.

A different approach towards private hypothesis testing is to look deeper into the non-private tests and try to ``privatize'' them. The most sample-efficient tests are variations of the classical $\chi^2$-test. They compute the number of times, $N_i$, that element $i \in \Sigma$ appears in the sample and aggregate those counts using a statistic that equals, or is close to, the $\chi^2$-divergence between the empirical distribution defined by these counts and the hypothesis distribution $q$. They accept $q$ if the statistic is low and reject $q$ if it is high, using some threshold.

A reasonable approach to privatize such a test is to add noise, e.g. Laplace$(1/\epsilon)$ noise, to each count $N_i$, before running the test. 
It is well known that adding Laplace$(1/\ve)$ noise to a set of counts makes them differentially private, see Theorem~\ref{thm:laplace}.
However, it also increases the variance of the statistic. 
This has been noticed empirically in recent work of~\cite{GaboardiLRV16} for the $\chi^2$-test. We show that the variance of the optimal $\chi^2$-style test statistic significantly increases if we add Laplace noise to the counts, in Section~\ref{sec:laplacechisqvar}, thus increasing the sample complexity from $O(\sqrt{n})$ to $\Omega(n^{3/4})$. So this route, too, seems problematic.

A last approach towards designing differentially private tests is to exploit the distance beween the null and the alternative hypotheses. A correct test should accept the null with probability close to $1$, and reject an alternative that is $\alpha$-far from the null with probability close to $1$, but there are no requirements for correctness when the alternative is very close to the null. We could thus try to interpolate smoothly between datasets that we expect to see when sampling the null and datasets that we expect to see when sampling an alternative that is far from the null. Rather than outputting ``accept'' or ``reject'' by merely thresholding our statistic, we would like to tune the probability that we output ``reject'' based on the value of our statistic, and make it so that the ``reject'' probability is $\epsilon$-Lipschitz as a function of the dataset. Moreover, the probability should be close to $0$ on datasets that we expect to see under the null and close to $1$ on datasets that we expect to see under an alternative that is $\alpha$-far. As we show in Section~\ref{sec:chisqsens}, $\chi^2$-style statistics have high sensitivity, requiring $\omega(\sqrt{n})$ samples to be made appropriately Lipschitz.

While both the approach of adding noise to the counts, and that of turning the output of the test Lipschitz fail in isolation, our test actually goes through by intricately combining these two approaches. It has two steps:
\begin{enumerate}
\item A {\em filtering step,} whose goal is to ``reject'' when $p$ is blatantly far from $q$. This step is performed by comparing the counts $N_i$ with their expectations under $q$, after having added Laplace$(1/\epsilon)$ noise to these counts. If the noisy counts deviate from their expectation, taking into account the extra variance introduced by the noise, then we can safely ``reject.'' Moreover, because noise was added, this step is differentially private.

\item If the filtering step fails to reject, we perform a {\em statistical step.} This step just computes the $\chi^2$-style statistic from~\cite{AcharyaDK15}, {\em without adding noise to the counts.} The crucial observation is that if the filtering step does not reject, then the statistic is actually $\epsilon$-Lipschitz with respect to the counts, and thus the value of the statistic is still differentially private. We use the value of the statistic to determine the bias of a coin that outputs ``reject.''  
\end{enumerate}

Details of our test are given in Section~\ref{sec:mainub}.

\paragraph{Related Work.}
Identity testing is one of the most classical problems in statistics, where it is traditionally called hypothesis or goodness-of-fit testing, see \cite{Pearson00,Fisher35,RaoS81,Agresti11} for some classical and contemporary references.
In this field, the focus is often on asymptotic analysis, where the number of samples goes to infinity, and we wish to get a grasp on their asymptotic distributions and error exponents \cite{Agresti11,TanAW10}.
In the past twenty years, this problem has enjoyed significant interest in the theoretical computer science community (see, i.e., \cite{BatuFFKRW01,Paninski08,LeviRR13,ValiantV14,AcharyaDK15,CanonneDGR15,DiakonikolasK16, DaskalakisDK16}, and \cite{Canonne15} for a survey), where the focus has instead been on the finite sample regime, rather than asymptotics.
Specifically, the goal is to minimize the number of samples required, while still remaining computationally tractable.

A number of recent works \cite{WangLK15,GaboardiLRV16,KiferR17} (and a simultaneous work, focused on independence testing \cite{KakizakiSF17}) investigate differential privacy with the former set of goals.
In particular, their algorithms focus on fixing a desired significance (type I error) and privacy requirement, and study the asymptotic distribution of the test statistics.
On the other hand, we are the first work to apply differential privacy to the latter line of inquiry, where our goal is to minimize the number of samples required to ensure the desired significance, power and privacy.
As a point of comparison between these two worlds, we provide an empirical evaluation of our method versus their methods.

The problem of distribution \emph{estimation} (rather than testing) has also recently been studied under the lens of differential privacy \cite{DiakonikolasHS15}.
This is another classical statistics problem which has recently piqued the interest of the theoretical computer science community.
We note that the techniques required for this setting are quite different from ours, as we must deal with issues that arise from very sparsely sampled data.

\section{Preliminaries} \label{sec:prelim}
In this paper, we will focus on discrete probability distributions over $[n]$.
For a distribution $p$, we will use the notation $p_i$ to denote the mass $p$ places on symbol $i$.

\begin{definition}
The \emph{total variation distance} between $p$ and $q$ is defined as
$$\dtv(p,q) = \frac12\sum_{i \in [n]} \left|p_i - q_i\right|.$$
\end{definition}

\begin{definition}
A randomized algorithm $M$ with domain $\mathbb{N}^{n}$ is \emph{$(\ve, \d)$-differentially private} if for all $S \subseteq \mathrm{Range}(M)$ and for all pairs of inputs $D, D'$ such that $\|D - D'\|_1 \leq 1$:
$$\Pr\left[M(D) \in S\right] \leq e^{\ve}\Pr\left[M(D') \in S\right] + \d.$$ 
If $\d = 0$, the guarantee is called \emph{pure} differential privacy.
\end{definition}

In the context of distribution testing, the neighboring dataset definition corresponds to two datasets where one dataset is generated from the other by removing one sample.
Up to a factor of 2, this is equivalent to the alternative definition where one dataset is generated from the other by arbitrarily changing one sample.

\begin{definition}
An algorithm for the \emph{$(\a, \b_{\rm I},\b_{\rm II})$-identity testing} problem with respect to a (known) distribution $q$ takes $m$ samples from an (unknown) distribution $p$ and has the following guarantees:
\begin{itemize}
\item If $p = q$, then with probability at least $1-\b_{\rm I}$ it outputs ``$p = q$;''
\item If $\dtv(p,q) \geq \a$, then with probability at least $1-\b_{\rm II}$ it outputs ``$p \neq q$.''
\end{itemize}
In particular, $\b_{\rm I}$ and $\b_{\rm II}$ are the type I and type II errors of the test. Parameter $\a$ is the radius of distinguishing accuracy. Notice that, when $p$ satisfies neither of cases above, the algorithm's output may be arbitrary.
\end{definition}

We note that if an algorithm is to satisfy both these definitions, the latter condition (the \emph{correctness} property) need only be satisfied when $p$ falls into one of the two cases, while the former condition (the \emph{privacy} property) must be satisfied for \emph{all realizations} of the samples from $p$ (and in particular, for $p$ which do not fall into the two cases above).

We recall the classical Laplace mechanism, which states that applying independent Laplace noise to a set of counts is differentially private.
\begin{theorem}[Theorem 3.6 of \cite{DworkR14}]
\label{thm:laplace}
Given a set of counts $N_1, \dots, N_n$, the noised counts $(N_1 + Y_1, \dots, N_n + Y_n)$ are $(\ve, 0)$-differentially private when the $Y_i$'s are i.i.d. random variables drawn from $Laplace(1/\ve)$. 
\end{theorem}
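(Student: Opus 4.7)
The plan is to establish the inequality $\Pr[M(N) \in S] \leq e^{\varepsilon} \Pr[M(N') \in S]$ by a pointwise bound on the ratio of densities of the output distribution at neighboring count vectors $N = (N_1, \ldots, N_n)$ and $N' = (N_1', \ldots, N_n')$ with $\|N - N'\|_1 \leq 1$. Since the mechanism has continuous output and the noises are independent, it suffices to bound the ratio of joint densities and then integrate over any measurable $S \subseteq \reals^n$.

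First, I would write the density of a $\mathrm{Laplace}(1/\varepsilon)$ random variable as $f(y) = (\varepsilon/2) \exp(-\varepsilon |y|)$ and, using independence of the $Y_i$'s, express the joint density of the output $Z = (N_1 + Y_1, \ldots, N_n + Y_n)$ evaluated at $z = (z_1, \ldots, z_n)$ as $\prod_{i=1}^{n} (\varepsilon/2) \exp(-\varepsilon |z_i - N_i|)$. Taking the ratio between the densities under $N$ and under $N'$ cancels the $(\varepsilon/2)^n$ prefactors and yields
\[
\frac{p_{Z\mid N}(z)}{p_{Z\mid N'}(z)} \;=\; \exp\!\left( \varepsilon \sum_{i=1}^n \bigl( |z_i - N_i'| - |z_i - N_i| \bigr) \right).
\]

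The key step is then the reverse triangle inequality applied termwise: $|z_i - N_i'| - |z_i - N_i| \leq |N_i - N_i'|$, from which
\[
\sum_{i=1}^n \bigl( |z_i - N_i'| - |z_i - N_i| \bigr) \;\leq\; \|N - N'\|_1 \;\leq\; 1,
\]
so the ratio is bounded pointwise by $e^{\varepsilon}$. Integrating this pointwise inequality against $\mathbf{1}[z \in S]$ gives the required inequality $\Pr[M(N) \in S] \leq e^{\varepsilon} \Pr[M(N') \in S]$ for every measurable $S$, which is exactly $(\varepsilon,0)$-differential privacy.

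There is no real obstacle; the only subtlety to check is that the neighboring-dataset condition $\|D - D'\|_1 \leq 1$ translates to $\|N - N'\|_1 \leq 1$ in count space, which is immediate because removing (or inserting) a single sample changes a single coordinate of the count vector by $1$. The argument is entirely symmetric in $N$ and $N'$, so the bound holds in both directions, and nothing further is needed since $\delta = 0$.
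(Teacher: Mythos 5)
Your proof is correct and is the standard density-ratio argument for the Laplace mechanism. The paper does not reprove this result — it cites it directly as Theorem~3.6 of \cite{DworkR14} — and your argument matches the proof given there: bound the ratio of the joint output densities pointwise via the reverse triangle inequality, note that the $\ell_1$-sensitivity of the count vector is $1$, and integrate over $S$. No gaps.
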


Finally, we recall the definition of zero-concentrated differential privacy from \cite{BunS16} and its relationship to differential privacy.

\begin{definition}
A randomized algorithm $M$ with domain $\mathbb{N}^{n}$ is \emph{$\rho$-zero-concentrated differentially private} ($\rho$-zCDP) if for all pairs of inputs $D, D'$ such that $\|D - D'\|_1 \leq 1$ and all $\a \in (1, \infty)$:
$$\mathrm{D}_\alpha(M(D)||M(D')) \leq \rho \alpha,$$
where $\mathrm{D}_\alpha$ is the $\alpha$-R\'enyi divergence between the distribution of $M(D)$ and $M(D')$. 
\end{definition}

\begin{proposition}[Propositions 1.3 and 1.4 of \cite{BunS16}]\label{prop:zcdp}
If a mechanism $M_1$ satisfies $(\ve, 0)$-differential privacy, then $M_1$ satisfies $\frac{\ve^2}{2}$-zCDP.
If a mechanism $M_2$ satisfies $\rho$-zCDP, then $M_2$ satisfies $(\rho + 2\sqrt{\rho \log(1/\d)}, \d)$-differential privacy for any $\d > 0$.
\end{proposition}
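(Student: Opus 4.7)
The proposition restates two results from Bun and Steinke~\cite{BunS16}, so in the paper it is essentially quoted rather than proved. Nevertheless, I outline below how I would derive each direction from scratch.

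For the first implication (pure $(\varepsilon,0)$-DP implies $\tfrac{\varepsilon^2}{2}$-zCDP), the plan is to analyze the privacy loss random variable $L(y) = \log\!\left(\Pr[M_1(D)=y]/\Pr[M_1(D')=y]\right)$. Pure $\varepsilon$-DP is exactly the statement $|L| \le \varepsilon$ pointwise. Writing the Rényi divergence as $\mathrm{D}_\alpha(M_1(D)\|M_1(D')) = \frac{1}{\alpha-1}\log \mathbb{E}_{y\sim M_1(D')}[e^{\alpha L(y)}]$, the task reduces to bounding $\mathbb{E}[e^{\alpha L}]$ over random variables $L$ supported on $[-\varepsilon,\varepsilon]$ subject to the likelihood-ratio normalization $\mathbb{E}[e^{L}]=1$. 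A standard extreme-point (convexity) argument shows the maximizer is a two-point distribution on $\{-\varepsilon,+\varepsilon\}$, with probabilities pinned down uniquely by the normalization. Substituting and Taylor-expanding in $\varepsilon$ then gives $\mathrm{D}_\alpha \le \alpha \varepsilon^2 / 2$, which is exactly $\tfrac{\varepsilon^2}{2}$-zCDP.

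For the second implication ($\rho$-zCDP implies $(\rho + 2\sqrt{\rho\log(1/\delta)},\delta)$-DP), the strategy is to convert the moment bound supplied by zCDP into a tail bound on the privacy loss and then invoke the standard equivalence between privacy-loss tail bounds and approximate DP. Rearranging the zCDP hypothesis gives $\mathbb{E}_{y \sim M_2(D)}\bigl[e^{(\alpha-1)L(y)}\bigr] \le e^{\rho\alpha(\alpha-1)}$ for every $\alpha > 1$. Markov's inequality applied to $e^{(\alpha-1)L}$ then yields $\Pr_{y\sim M_2(D)}[L(y) > t] \le e^{\rho\alpha(\alpha-1) - (\alpha-1)t}$. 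Optimizing the quadratic-in-$\alpha$ exponent against the linear one produces a subgaussian-type bound $\Pr[L > t] \le \exp\!\bigl(-(t-\rho)^2/(4\rho)\bigr)$, and equating the right-hand side to $\delta$ solves to $t = \rho + 2\sqrt{\rho\log(1/\delta)}$. The conversion is then the usual one: for any event $S$, $\Pr[M_2(D)\in S] \le \Pr[M_2(D)\in S,\, L(y) \le t] + \delta \le e^{t}\Pr[M_2(D')\in S] + \delta$, which is $(t,\delta)$-DP.

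The main technical obstacle is the Markov/Chernoff step in the second direction: one must track the subgaussian constants precisely enough to recover the stated parameter $\rho + 2\sqrt{\rho\log(1/\delta)}$ exactly, rather than only up to universal constants. The first direction is comparatively routine once the two-point reduction is in place, and both halves are ultimately structural facts about the privacy loss random variable rather than anything specific to our mechanism.
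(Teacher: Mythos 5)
The paper does not actually prove this proposition; it quotes it directly from Bun and Steinke \cite{BunS16} (their Propositions 1.3 and 1.4), so there is no in-paper argument to compare against. You correctly flag this. Your reconstruction of the second half (zCDP $\Rightarrow$ approximate DP) is the standard one and is correct as outlined: the R\'enyi hypothesis rearranges to the moment bound $\mathbb{E}_{y\sim M_2(D)}\bigl[e^{(\alpha-1)L(y)}\bigr]\le e^{\rho\alpha(\alpha-1)}$, Markov plus optimizing the quadratic exponent over $\alpha>1$ gives $\Pr[L>t]\le e^{-(t-\rho)^2/(4\rho)}$, setting this to $\delta$ yields $t=\rho+2\sqrt{\rho\log(1/\delta)}$, and the usual tail-to-$(t,\delta)$-DP conversion finishes. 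The constants come out exactly as stated, so the concern you raise about tracking subgaussian constants is handled by your own derivation.

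The first half has a genuine gap at the last step. The extreme-point reduction is sound: $e^{\alpha x}$ is convex and $\mathbb{E}[e^{L}]=1$ is a single linear constraint, so the maximizer of $\mathbb{E}[e^{\alpha L}]$ over $L$ supported in $[-\varepsilon,\varepsilon]$ is the two-point law on $\{\pm\varepsilon\}$, i.e.\ exactly the output pair of $\varepsilon$-randomized response. But ``Taylor-expanding in $\varepsilon$'' does not establish $D_\alpha\le\frac{\alpha\varepsilon^2}{2}$ for all $\varepsilon>0$; it only shows $D_\alpha=\frac{\alpha\varepsilon^2}{2}(1+o(1))$ as $\varepsilon\to 0$, which is strictly weaker than what the proposition asserts. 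To close the argument you must verify the exact inequality $\frac{e^{\alpha\varepsilon}+e^{(1-\alpha)\varepsilon}}{e^{\varepsilon}+1}\le e^{\alpha(\alpha-1)\varepsilon^2/2}$. Writing $c=\varepsilon/2$, the left side equals $\cosh\bigl((2\alpha-1)c\bigr)/\cosh(c)$ and the right side equals $\exp\bigl(\tfrac{c^2}{2}\bigl((2\alpha-1)^2-1\bigr)\bigr)$, so the claim is equivalent to $f\bigl((2\alpha-1)c\bigr)\le f(c)$ for $f(x)=\cosh(x)e^{-x^2/2}$; this holds because $f'(x)=e^{-x^2/2}\cosh(x)\bigl(\tanh(x)-x\bigr)\le 0$ on $[0,\infty)$ and $(2\alpha-1)c>c$. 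Some monotonicity or convexity argument of this kind is needed in place of the Taylor expansion, which is the one substantive omission in your sketch.
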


\section{A Simple Upper Bound} \label{sec:trivialub}
In this section, we provide an $O\left(\frac{\sqrt{n}}{\a^2 \ve}\right)$ upper bound for the differentially private identity testing problem.
More generally, we show that if an algorithm requires a dataset of size $m$ for a decision problem, then it can be made $(\ve, 0)$-differentially private at a multiplicative cost of $1/\ve$ in the sample size.
This is a folklore result, but we include and prove it here for completeness.
\begin{theorem}
Suppose there exists an algorithm for a decision problem $P$ which succeeds with probability at least $1 - \b$ and requires a dataset of size $m$.
Then there exists an $(\ve, 0)$-differentially private algorithm for $P$ which succeeds with probability at least $\frac{4}{5}(1 - \b) + 1/10$ and requires a dataset of size $O(m/\ve)$.
\end{theorem}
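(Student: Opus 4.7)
The plan is to use the standard ``subsample-and-mix with randomized response'' trick, which produces a pure $(\varepsilon,0)$-DP mechanism from any (non-private) black-box decision procedure $A$ at a multiplicative cost of $O(1/\varepsilon)$ in the sample size. Concretely, let $k = \lceil 8/\varepsilon \rceil$ (or any $k = \Theta(1/\varepsilon)$ large enough for the inequality below), and ask for an input dataset $D$ of size $N = k\cdot m = O(m/\varepsilon)$. Partition $D$ arbitrarily but canonically into $k$ disjoint chunks $D^{(1)},\dots,D^{(k)}$, each of size $m$. Define the output mechanism $M$ as follows: with probability $4/5$, draw $j\in[k]$ uniformly at random and output $A(D^{(j)})$; with probability $1/5$, output a uniformly random element of $\{\text{accept},\text{reject}\}$. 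Note that since $P$ is a decision problem the output alphabet has only two elements, and this is essential for the privacy analysis.

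For \emph{correctness}, conditioned on being in the ``run $A$'' branch, $A$ succeeds with probability at least $1-\beta$ regardless of which chunk we pick; conditioned on being in the ``random coin'' branch, we succeed with probability exactly $1/2$. So the total success probability is $\frac{4}{5}(1-\beta)+\frac{1}{5}\cdot\frac{1}{2}=\frac{4}{5}(1-\beta)+\frac{1}{10}$, which matches the statement.

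For \emph{privacy}, fix neighboring datasets $D,D'$ with $\|D-D'\|_1\le 1$. They differ in exactly one coordinate, which falls into exactly one chunk $i^{*}$; so $A(D^{(j)})$ and $A(D'^{(j)})$ have identical distributions for every $j\neq i^*$. Fix any output $y\in\{\text{accept},\text{reject}\}$ and let
\[
C \;=\; \frac{4}{5k}\sum_{j\neq i^*}\Pr[A(D^{(j)})=y] \;+\; \frac{1}{10}.
\]
Then $\Pr[M(D)=y] = \tfrac{4}{5k}\Pr[A(D^{(i^*)})=y] + C$ and similarly for $D'$. Since the first summand lies in $[0,\tfrac{4}{5k}]$, we get $\Pr[M(D)=y]\le \tfrac{4}{5k}+C$ and $\Pr[M(D')=y]\ge C \ge \tfrac{1}{10}$. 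Therefore
\[
\frac{\Pr[M(D)=y]}{\Pr[M(D')=y]} \;\le\; 1 + \frac{4/(5k)}{1/10} \;=\; 1+\frac{8}{k} \;\le\; 1+\varepsilon \;\le\; e^{\varepsilon},
\]
where the second-to-last inequality uses our choice $k\ge 8/\varepsilon$. This is exactly pure $(\varepsilon,0)$-DP.

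The only nontrivial step is the privacy calculation, and the main subtlety is making the ratio bound go through: subsampling alone only shrinks the \emph{additive} gap between $\Pr[M(D)=y]$ and $\Pr[M(D')=y]$ to $O(1/k)$, which is not enough for a multiplicative guarantee because $\Pr[M(D')=y]$ might otherwise be $0$. The randomized-response mixture is precisely what provides the $\Omega(1)$ lower bound on $\Pr[M(D')=y]$ (here $1/10$), converting additive closeness into multiplicative closeness. The constant $4/5$ mixture weight is then tuned so that the resulting success probability matches the stated $\tfrac{4}{5}(1-\beta)+\tfrac{1}{10}$ while leaving enough slack in $k$ to absorb the $8$ in the numerator of the ratio bound.
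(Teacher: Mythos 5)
Your proof is correct and follows essentially the same strategy as the paper's: partition the dataset into $\Theta(1/\varepsilon)$ chunks, run the non-private algorithm on a random chunk with probability $4/5$, and mix with a fair coin with probability $1/5$ so that every output has probability at least $1/10$, converting the $O(\varepsilon)$ additive shift from changing one chunk into a multiplicative $e^{\varepsilon}$ bound. The only difference is cosmetic: you use $k=\lceil 8/\varepsilon\rceil$ chunks and explicitly carry the $4/5$ subsampling factor through the privacy calculation, whereas the paper uses $10/\varepsilon$ chunks and elides that factor; your write-up of the privacy step is in fact a bit more careful than the paper's.
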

\begin{proof}
First, with probability $1/5$, we flip a coin and output yes or no with equal probability.
This guarantees that we have probability at least $1/10$ of either outcome, which will allow us to satisfy the multiplicative guarantee of differential privacy.

We then draw $10/\ve$ datasets of size $m$, and solve the decision problem (non-privately) for each of them.
Finally, we select a random one of these computations and output its outcome.

The correctness follows, since we randomly choose the right answer with probability $1/10$, or with probability $4/5$, we solve the problem correctly with probability $1 - \b$.
As for privacy, we note that, if we remove a single element of the dataset, we may only change the outcome of one of these computations.
Since we pick a random computation, this is selected with probability $\ve/10$, and thus the probability of any outcome is additively shifted by at most $\ve/10$. 
Since we know the minimum probability of any output is $1/10$, this gives the desired multiplicative guarantee required for $(\ve, 0)$-differential privacy.
\end{proof}

We obtain the following corollary by noting that the tester of \cite{AcharyaDK15} (among others) requires $O(\sqrt{n}/\a^2)$ samples for identity testing.
\begin{corollary} \label{cor:trivialub}
There exists an $(\ve, 0)$-differentially private testing algorithm for the $(\a, \beta_{\rm I}, \beta_{\rm II})$-identity testing problem for any distribution $q$ which requires
$$m = O\left(\frac{\sqrt{n}}{\ve\a^2} \cdot \log(1/\b)\right)$$
samples, where $\b = \min{(\b_{\rm I}, \b_{\rm II})}$.
\end{corollary}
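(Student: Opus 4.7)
My proof plan is to apply the preceding theorem to a non-private identity tester with suitably small failure probability. The Acharya--Daskalakis--Kamath tester of \cite{AcharyaDK15} solves the $(\a, 1/3, 1/3)$-identity testing problem using $O(\sqrt{n}/\a^2)$ samples, as noted in the paragraph preceding the corollary. Standard probability amplification by running $O(\log(1/\b))$ independent copies of this tester on disjoint chunks of samples and taking a majority vote drives the failure probability down to $\b$, at the cost of a multiplicative $\log(1/\b)$ factor in the sample size. The resulting non-private tester uses $m = O\!\left(\frac{\sqrt{n}}{\a^2}\cdot\log(1/\b)\right)$ samples and solves $(\a, \b_{\rm I}, \b_{\rm II})$-identity testing for the target values of $\b_{\rm I}$ and $\b_{\rm II}$.

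Next, I plug this amplified non-private tester into the preceding theorem as the base decision algorithm $P$. The theorem immediately produces an $(\ve,0)$-differentially private algorithm using $O(m/\ve) = O\!\left(\frac{\sqrt{n}}{\ve\a^2}\cdot\log(1/\b)\right)$ samples, which is exactly the bound claimed in the corollary. The privacy guarantee is transferred verbatim from the theorem. It remains to check correctness: the theorem asserts a success probability of at least $\tfrac{4}{5}(1-\b) + \tfrac{1}{10}$, and by starting with sufficiently small inner failure parameter (absorbing constants into the big-$O$), the desired type~I and type~II error targets can be met for the regime of interest.

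I do not anticipate any real obstacle here, since the corollary is essentially an instantiation of the theorem with the best known non-private tester. The only mild subtlety is the bookkeeping around the additive $1/10$ slack in the theorem's success probability; this is absorbed either by tuning the inner failure parameter to be a small constant multiple of the target or by noting that the $\log(1/\b)$ amplification is applied at the non-private stage, so the final $\b$-dependence in the sample complexity is inherited directly from the input tester rather than from any external boosting of the private output.
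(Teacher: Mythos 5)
Your plan amplifies the non-private tester first and then feeds the amplified tester into the preceding theorem. This order does not work, and the gap is not a ``mild subtlety'' that can be absorbed by tuning constants. The theorem guarantees a success probability of only $\tfrac{4}{5}(1-\b) + \tfrac{1}{10}$, which is capped at $9/10$ \emph{regardless} of how small the inner failure parameter $\b$ is. The $1/10$ term comes from the theorem's unconditional coin flip (with probability $1/5$ it answers at random), so it persists no matter how accurate the base tester is. Consequently, plugging in an amplified tester with inner failure $\b \to 0$ still yields a private algorithm that errs with probability $\geq 1/10$, and the corollary's claim for arbitrary $\b_{\rm I}, \b_{\rm II}$ is not recovered. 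Shrinking the coin-flip probability to $\gamma$ would require drawing $\Theta(1/(\ve\gamma))$ datasets, i.e.\ a $1/\b$ blowup rather than $\log(1/\b)$.

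The fix is to reverse the order: first apply the theorem to a constant-error non-private tester (e.g.\ the $O(\sqrt{n}/\a^2)$-sample tester of \cite{AcharyaDK15} with failure $1/3$), obtaining an $(\ve,0)$-DP tester on $O(\sqrt{n}/(\ve\a^2))$ samples whose failure probability is $1 - \bigl(\tfrac{4}{5}\cdot\tfrac{2}{3} + \tfrac{1}{10}\bigr) = 11/30 < 1/2$, a constant bounded away from $1/2$. Then amplify \emph{after} privatizing: run $O(\log(1/\b))$ independent copies on disjoint fresh samples and output the majority. This drives both error types down to $\b$ while preserving $\ve$-differential privacy, since changing one sample affects only one copy; conditioning on the others, that copy either is irrelevant to the majority or determines it, and in the latter case its own $\ve$-DP guarantee carries over (this is exactly the privacy-preserving amplification argument used in the proof of Theorem~\ref{thm:mainub}). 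The total sample count is $O\!\bigl(\tfrac{\sqrt{n}}{\ve\a^2}\log(1/\b)\bigr)$ as claimed.
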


\section{Roadblocks to Differentially Private Testing}
In this section, we describe roadblocks which prevent two natural approaches to differentially private testing from working.

In Section~\ref{sec:laplacechisqvar}, we show that if one simply adds Laplace noise to the empirical counts of a dataset (i.e., runs the Laplace mechanism of Theorem \ref{thm:laplace}) and then attempts to run an optimal identity tester, the variance of the statistic increases dramatically, and thus results in a much larger sample complexity, even for the case of uniformity testing.
The intuition behind this phenomenon is as follows.
When performing uniformity testing in the small sample regime (when the number of samples $m$ is the square root of the domain size $n$), we will see a $(1 - o(1)) n$ elements $0$ times, $O(\sqrt{n})$ elements $1$ time, and $O(1)$ elements $2$ times.
If we add $Laplace(10)$ noise to guarantee $(0.1, 0)$-differential privacy, this obliterates the signal provided by these collision statistics, and thus many more samples are required before the signal prevails.

In Section~\ref{sec:chisqsens}, we demonstrate that $\chi^2$ statistics have high sensitivity, and thus are not naturally differentially private.
In other words, if we consider a $\chi^2$ statistic $Z$ on two datasets $D$ and $D'$ which differ in one record, $|Z(D) - Z(D')|$ may be quite large.
This implies that methods such as rescaling this statistic and interpreting it as a probability, or applying noise to the statistic, will not be differentially private until we have taken a large number of samples.

\subsection{A Laplaced $\chi^2$-statistic has large variance}
\label{sec:laplacechisqvar}
\begin{proposition}
Applying the Laplace mechanism to a dataset before applying the identity tester of \cite{AcharyaDK15} results in a significant increase in the variance, even when considering the case of uniformity.
More precisely, if we consider the statistic
$$Z'(D) = \sum_{i \in [n]} \frac{(N_i + Y_i - m/n)^2 - (N_i + Y_i)}{m/n}$$
where $N_i$ is the number of occurrences of symbol $i$ in the dataset $D$ (which is of size $Poisson(m)$) and $Y_i \sim Laplace(1/\ve)$, then
\begin{itemize}
\item If $p$ is uniform, then $\E[Z'] = \frac{2n^2}{\ve^2 m}$ and $\Var[Z'] \geq \frac{20n^3}{\ve^4m^2}$.
\item If $p$ is a particular distribution which is $\a$-far in total variation distance from uniform, then $\E[Z'] = 4m \a^2 +  \frac{2n^2}{\ve^2 m}$.
\end{itemize}
The variance of the statistic can be compared to that of the unnoised statistic, which is upper bounded by $m^2 \a^4$.
We can see that the noised statistic has larger variance until $m = \Omega(n^{3/4})$.
\end{proposition}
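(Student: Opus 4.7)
The plan is to exploit the Poissonization: since $|D|\sim\mathrm{Poisson}(m)$, the counts $N_1,\dots,N_n$ are independent with $N_i\sim\mathrm{Poisson}(mp_i)$, and they are jointly independent of the $Y_i$'s. Writing $M_i=N_i+Y_i$ and $W_i=(M_i-m/n)^2-M_i$, we have $Z'=\frac{n}{m}\sum_i W_i$, a sum of $n$ independent terms to which we can apply linearity of expectation and additivity of variance. I will use the standard Laplace moments $\E[Y_i]=0$, $\E[Y_i^2]=2/\ve^2$, $\E[Y_i^3]=0$, $\E[Y_i^4]=24/\ve^4$, together with $\E[N_i]=mp_i$ and $\Var[N_i]=mp_i$.

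For the first mean, expand $\E[W_i]=\Var[M_i]+(\E[M_i]-m/n)^2-\E[M_i]$. Under uniformity $\E[M_i]=m/n$ and $\Var[M_i]=m/n+2/\ve^2$, so $\E[W_i]=2/\ve^2$, and multiplying by $n/(m/n)\cdot(\text{number of terms})$ gives $\E[Z']=2n^2/(\ve^2 m)$. For the $\alpha$-far case, I would use the canonical paired construction $p_i=(1\pm 2\alpha)/n$ (half the coordinates each), which has $\dtv(p,U)=\alpha$ and $\sum_i(p_i-1/n)^2=4\alpha^2/n$. Repeating the same expansion with $\E[M_i]=mp_i$ picks up the extra bias term $(mp_i-m/n)^2$, summing to $\frac{n}{m}\sum_i(mp_i-m/n)^2=mn\sum_i(p_i-1/n)^2=4m\alpha^2$, which added to the previous $2n^2/(\ve^2 m)$ gives the claimed expectation.

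The heart of the proposition is the variance lower bound. By independence, $\Var[Z']=(n/m)^2\sum_i\Var[W_i]$, so it suffices to show $\Var[W_i]\geq 20/\ve^4$ for each $i$. I would use the law of total variance $\Var[W_i]\geq\E\bigl[\Var[W_i\mid N_i]\bigr]$ and condition on $N_i$. Setting $a=N_i-m/n$, we get
\[
W_i=(a+Y_i)^2-(N_i+Y_i)=Y_i^2+(2a-1)Y_i+\text{const}(N_i),
\]
so $\Var[W_i\mid N_i]=\Var[Y_i^2]+(2a-1)^2\Var[Y_i]+2(2a-1)\,\mathrm{Cov}(Y_i^2,Y_i)$. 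Since $Y_i$ is symmetric, $\mathrm{Cov}(Y_i^2,Y_i)=\E[Y_i^3]-\E[Y_i^2]\E[Y_i]=0$, and the first term equals $\E[Y_i^4]-(\E[Y_i^2])^2=24/\ve^4-4/\ve^4=20/\ve^4$. Dropping the nonnegative $(2a-1)^2(2/\ve^2)$ contribution gives the desired $\Var[W_i\mid N_i]\geq 20/\ve^4$ pointwise, hence $\Var[Z']\geq(n/m)^2\cdot n\cdot 20/\ve^4=20n^3/(\ve^4 m^2)$.

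The potential obstacle is making sure the constant $20$ is not lost by a sloppy decomposition; using the conditional variance against $N_i$ (rather than trying to expand the full fourth moment of $M_i$) cleanly isolates the $\Var[Y_i^2]$ contribution and keeps the arithmetic honest. Finally, the comparison to $m^2\alpha^4$ and the crossover at $m=\Omega(n^{3/4})$ follows by setting $20n^3/(\ve^4 m^2)=m^2\alpha^4$ and solving for $m$ (treating $\ve,\alpha$ as constants), which is a one-line calculation at the end.
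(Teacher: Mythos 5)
Your proposal is correct and reaches the same conclusions as the paper, but your variance argument is organized differently in a way worth noting. The paper expands the full variance of each term into a four-summand expression
\[
\Var[Z'] = \sum_{i} \frac{1}{\lambda_i'^2}\left[2\lambda_i^2 + 4\lambda_i(\lambda_i-\lambda_i')^2 + \frac{1}{\ve^2}\bigl(8\lambda_i + 2(2\lambda_i-2\lambda_i'-1)^2\bigr) + \frac{20}{\ve^4}\right]
\]
and then drops the first three (nonnegative) summands. You instead apply the law of total variance, conditioning on $N_i$: the conditional variance of $W_i$ given $N_i$ is $\Var[Y_i^2]+(2a-1)^2\Var[Y_i]$ (the cross term vanishes by symmetry of the Laplace), which is pointwise at least $\Var[Y_i^2]=20/\ve^4$. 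The two routes are algebraically equivalent -- your $\E[\Var[W_i\mid N_i]]$ is precisely the last two summands of the paper's expansion, and $\Var[\E[W_i\mid N_i]]$ gives the first two -- but your version avoids ever computing the Poisson-only contribution and makes it transparent that the $20/\ve^4$ floor comes purely from the kurtosis of the Laplace noise. For the means and the $\alpha$-far construction you use the same paired $(1\pm 2\alpha)/n$ distribution and the identity $\E[Z']=m\,\chi^2(p,q)+2n^2/(\ve^2 m)$, matching the paper, and the final crossover $m=\Omega(n^{3/4}/(\ve\alpha))$ is computed identically.
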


\begin{proof}
First, we compute the mean of $Z'$.
Note that since $|D| \sim Poisson(m)$, the $N_i$'s will be independently distributed as $Poisson(mp_i)$ (see, i.e., \cite{AcharyaDK15} for additional discussion).
\begin{align*}
\E[Z'] &= \E \biggr[ \sum\limits_{i \in [n]} \frac{(N_i + Y_i - m/n)^2 - (N_i + Y_i)}{m/n} \biggr] \\
&= \E \biggr[ \sum\limits_{i \in [n]} \frac{(N_i - m/n)^2 - N_i}{m/n} \\
&+ \sum\limits_{i \in [n]} \frac{Y_i^2 + 2Y_i(N_i - m/n) - Y_i}{m/n} \biggr] \\
&= m \cdot \chi^2(p, q) + \sum\limits_{i \in [n]} \frac{\frac{2}{\epsilon^2}}{m/n} \\
&= m \cdot \chi^2(p, q) + \frac{2n^2}{\epsilon^2 m} 
\end{align*}
In other words, the mean is a rescaling of the $\chi^2$ distance between $p$ and $q$, shifted by some constant amount.
When $p = q$, the $\chi^2$-distance between $p$ and $q$ is $0$, and the expectation is just the second term.
Focus on the case where $n$ is even, and consider $p$ such that $p_i = (1+2\a)/n$ if $i$ is even, and $(1 - 2\a)/n$ otherwise.
This is $\a$-far from uniform in total variation distance.
Furthermore, by direct calculation, $\chi^2(p,q) = 4\a^2$, and thus the expectation of $Z'$ in this case is $4m \a^2 + \frac{2n^2}{\ve^2 m}$.

Next, we examine the variance of $Z'$.
Let $\l_i = mp_i$ and $\l_i' = mq_i = m/n$. 
By a similar computation as before, we have that 
\begin{align*}
\Var[Z'] &= \sum_{i \in [n]} \frac{1}{\lambda_i'^2} \biggr[ 2\lambda_i^2 + 4\lambda_i(\lambda_i - \lambda_i')^2 \\
&+ \frac{1}{\epsilon^2}(8\lambda_i + 2(2\lambda_i - 2\lambda_i' - 1)^2) + \frac{20}{\epsilon^4} \biggr].
\end{align*}
Since all four summands of this expression are non-negative, we have that
$$\Var[Z'] \geq \frac{20}{\ve^4}\sum_{i \in [n]} \frac{1}{\l_i'^2} = \frac{20n^3}{\ve^4 m^2}.$$

If we wish to use Chebyshev's inequality to separate these two cases, we require that $\Var[Z']$ is at most the square of the mean separation.
In other words, we require that
 $$\frac{20n^3}{\ve^4 m^2} \leq m^2 \a^4,$$
or that
$$m = \Omega\left(\frac{n^{3/4}}{\ve \a}\right).$$
\end{proof}

\subsection{A $\chi^2$-statistic has high sensitivity}
\label{sec:chisqsens}
Consider the primary statistic which we use in Algorithm \ref{alg:mainub}:
$$Z(D) = \frac{1}{m\a^2} \sum_{i \in [n]} \frac{(N_i - m q_i)^2 - N_i}{mq_i}.$$
As shown in Section \ref{sec:mainub}, $\E[Z] = 0$ if $p = q$ and $\E[Z] \geq 1$ if $\dtv(p,q) \geq \a$, and the variance of $Z$ is such that these two cases can be separated with constant probability.
A natural approach is to truncate this statistic to the range $[0,1]$, interpret it as a probability and output the result of $Bernoulli(Z)$ -- if $p = q$, the result is likely to be $0$, and if $\dtv(p,q) \geq \a$, the result is likely to be $1$.
One might hope that this statistic is naturally private.
More specifically, we would like that the statistic $Z$ has low sensitivity, and does not change much if we remove a single individual.
Unfortunately, this is not the case.
We consider datasets $D, D'$, where $D'$ is identical to $D$, but with one fewer occurrence of symbol $i$.
It can be shown that the difference in $Z$ is
\begin{align*}
|Z(D) - Z(D')| &= \frac{2|N_i - mq_i - 1|}{m^2 \a^2 q_i}
\end{align*}
Letting $q$ be the uniform distribution and requiring that this is at most $\ve$ (for the sake of privacy), we have a constraint which is roughly of the form
$$\frac{2N_in}{m^2 \a^2} \leq \ve,$$
or that
$$m = \Omega\left(\frac{\sqrt{N_i}\sqrt{n}}{\ve^{0.5}\a}\right).$$

In particular, if $N_i = n^c$ for any $c > 0$, this does not achieve the desired $O(\sqrt{n})$ sample complexity.
One may observe that, if $N_i$ is this large, looking at symbol $i$ alone is sufficient to conclude $p$ is not uniform, even if the count $N_i$ had Laplace noise added.
Indeed, our main algorithm of Section \ref{sec:mainub} works in part due to our formalization and quantification of this intuition.

\section{Priv'IT: A Differentially Private Identity Tester}
\label{sec:mainub}
In this section, we prove our main testing upper bound:
\begin{theorem}
\label{thm:mainub}
There exists an $(\ve,0)$-differentially private testing algorithm for the $(\a, \b_{\rm I}, \b_{\rm II})$-identity testing problem for any distribution $q$ which requires 
$$m = \tilde O\left(\max \left\{\frac{\sqrt{n}}{\a^2}, \frac{\sqrt{n}}{\a^{3/2}\ve}, \frac{n^{1/3}}{\a^{5/3}\ve^{2/3}}\right\} \cdot \log (1/\b) \right)$$
 samples, where $\b = \min {(\b_{\rm I}, \b_{\rm II})}$.
\end{theorem}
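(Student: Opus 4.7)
The plan is to instantiate the two-stage test sketched in the introduction and separately verify (i) pure differential privacy of the combined mechanism and (ii) correctness with the claimed sample complexity. We will draw $m = \mathrm{Poisson}(m_0)$ samples (for $m_0$ matching the theorem) so that the counts $N_i$ are independent with $N_i \sim \mathrm{Poisson}(mp_i)$, which simplifies variance calculations while costing only constant factors. We will amplify the success probability from constant to $1-\b$ by the standard trick of running $O(\log(1/\b))$ independent copies and majority-voting, which multiplies the sample complexity by the $\log(1/\b)$ factor in the statement; so the core task is a constant-error tester.

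First, I would split the alphabet into a ``large'' part $A_{\lrg} = \{ i : q_i > \tau \}$ and a ``sparse'' part $A_{\sparse} = [n] \setminus A_{\lrg}$, where $\tau$ is tuned so that cheap concentration takes care of $A_{\sparse}$. The \emph{filtering step} adds independent $\mathrm{Laplace}(1/\ve)$ noise $Y_i$ to each $N_i$ and rejects outright if, for some $i$, the noisy count $N_i+Y_i$ is farther from $mq_i$ than a threshold $T_i$ chosen as the sum of (a) the genuine Poisson fluctuation scale $\sqrt{mq_i}$, (b) the Laplace fluctuation scale $1/\ve$, and (c) a gap that ensures any surviving $i$ has a true count $N_i$ bounded by roughly $mq_i + T_i$. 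Because the only data-dependent quantities released in this step are the noisy counts $N_i + Y_i$, Theorem~\ref{thm:laplace} makes this step $(\ve/2, 0)$-differentially private. I will tune $T_i$ so that, when $p=q$, no element triggers rejection except with small probability (by Laplace tails and Poisson concentration), giving type I correctness for this stage.

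If the filter does not reject, the \emph{statistical step} computes the Acharya--Daskalakis--Kamath statistic
\[
Z(D) = \frac{1}{m\a^2}\sum_{i \in [n]} \frac{(N_i - mq_i)^2 - N_i}{mq_i},
\]
on the raw counts, and then outputs ``reject'' as a biased coin flip with probability $\min\{1,\max\{0,Z\}\}$ (truncated to a Lipschitz rescaling of $Z$). Here the crucial observation, made in Section~\ref{sec:chisqsens}, is that the per-coordinate sensitivity of $Z$ scales as $|N_i - mq_i|/(m^2\a^2 q_i)$, so after conditioning on the filter not having rejected, this quantity is bounded by the chosen $T_i/(m^2\a^2 q_i)$ for all $i$. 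I will then choose $m$ to be the smallest value that makes this bound at most $\ve/2$; carrying through the calculation for $i \in A_{\lrg}$ gives the $\sqrt{n}/(\a^{3/2}\ve)$ term, while the treatment of $A_{\sparse}$, where we need $T_i \gtrsim 1/\ve$ to dominate Laplace noise, gives the $n^{1/3}/(\a^{5/3}\ve^{2/3})$ term. The final $\sqrt{n}/\a^2$ term comes from the underlying ADK guarantee that ensures $\E[Z]\ge 1$ when $\dtv(p,q)\ge \a$ and $\Var[Z]=O(1)$ in both cases, so Chebyshev plus the biased-coin mechanism separates the two hypotheses with constant probability. Privacy of the overall test follows from basic composition of $(\ve/2,0)$ for the filter with $(\ve/2,0)$ for the Lipschitz Bernoulli output on the raw counts.

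The main obstacle I anticipate is calibrating $\tau$ and the thresholds $T_i$ simultaneously: they must be generous enough that, under $p=q$, the filter almost never fires (type I) yet strict enough that, once the filter passes, every coordinate's contribution to the sensitivity of $Z$ is at most $\ve/\mathrm{poly}$. Showing that the three regimes in the $\max$ suffice requires a case analysis on $q_i$ and a careful accounting of how Poisson tails, Laplace tails, and the $\chi^2$ denominator $mq_i$ interact; the $n^{1/3}$ term in particular reflects that, for the sparsest symbols, we cannot afford individual Laplace checks and must instead bucket-aggregate and rely on the statistical step to do the work, with the Lipschitz budget being the binding constraint.
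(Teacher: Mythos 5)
Your overall two-stage architecture (Laplace filter, then a Lipschitz $\chi^2$ statistic turned into a biased coin) matches the paper's, and you correctly identify the key observation that surviving the filter bounds $|N_i - mq_i|$, which in turn bounds the per-coordinate sensitivity of $Z$. However, there are two genuine gaps.

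First, the privacy argument does not close. You appeal to ``basic composition of $(\ve/2,0)$ for the filter with $(\ve/2,0)$ for the Lipschitz Bernoulli output,'' but the Bernoulli output is \emph{not} pure DP. Bounding $|Z(D)-Z(D')|\le \ve/2$ gives only that the reject probability changes \emph{additively} by $\ve/2$, i.e.\ $(0,\ve/2)$-DP; when $Z$ is near $0$ (as it will be under the null) the multiplicative ratio $\Pr[\mathrm{reject}\mid D]/\Pr[\mathrm{reject}\mid D']$ can be unbounded. The paper resolves this by first returning a uniformly random answer with a data-independent probability $c_2$ (triggered by an event on the $Y_i$'s alone, drawn before any data are seen). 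This forces every output to have probability at least $c_2/2$, at which point proving $(0, c_2\ve/2)$-DP for the remaining steps \emph{implies} $(\ve,0)$-DP. Without this random-output floor, your composition claim fails.

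Second, your handling of the sparse symbols is off. The statistic you write sums over all of $[n]$, but for $i$ with $q_i$ tiny the denominator $mq_i$ makes both the variance and the sensitivity explode, and there is no threshold $T_i$ that is simultaneously small enough for Lipschitzness and large enough to pass the filter under $p=q$. The paper sidesteps this entirely: it defines $\mathcal{A}=\{i: q_i \ge c_1\a/n\}$ and \emph{discards} $\mathcal{\bar A}$ from both the filter and $Z$, which is safe because the total $q$-mass on $\mathcal{\bar A}$ is at most $c_1\a$ and hence cannot change the answer by more than a constant fraction of $\a$. Relatedly, your attribution of the $n^{1/3}/(\a^{5/3}\ve^{2/3})$ term to the sparse part is incorrect: in the paper \emph{both} the $\sqrt{n}/(\a^{3/2}\ve)$ and $n^{1/3}/(\a^{5/3}\ve^{2/3})$ terms arise from bounding the sensitivity of $Z$ on $\mathcal{A}$ alone---the first from the Laplace component of the filter threshold, the second from the Poisson-fluctuation component $\sqrt{mq_i \log n}$---after substituting the worst case $q_i \approx c_1\a/n$ for $i\in\mathcal{A}$.
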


The pseudocode for this algorithm is provided in Algorithm \ref{alg:mainub}.
We fix the constants $c_1 = 1/4$ and $c_2 = 3/40$.
For a high-level overview of our algorithm's approach, we refer the reader to the Approach paragraph in Section~\ref{sec:intro}.

\begin{algorithm}[h]
\caption{Priv'IT: A differentially private identity tester}\label{alg:mainub}
\begin{algorithmic}[1]
\State \textbf{Input:} $\ve$; an explicit distribution $q$; sample access to a distribution $p$
\State Define $\mathcal{A} \leftarrow \{i:q_i \geq c_1\a/n\}$, $\mathcal{\bar A} \leftarrow [n] \setminus \mathcal{A}$
\State Sample $Y_i \sim Laplace(2/c_2\ve)$ for all $i \in \mathcal{A}$
\If {there exists $i \in \mathcal{A}$ such that $|Y_i| \geq \frac{2}{c_2\ve}\log\left(\frac{1}{1 - (1 - c_2)^{1/|\mathcal{A}|}}\right)$}
\State \Return either ``$p \neq q$'' or ``$p = q$'' with equal probability
\EndIf 
\State Draw a multiset $S$ of $Poisson(m)$ samples from $p$
\State Let $N_i$ be the number of occurrences of the $i$th domain element in $S$
\For {$i \in \mathcal{A}$}
\If {$|N_i + Y_i - mq_i| \geq \frac{2}{c_2\ve}\log\left(\frac{1}{1 - (1 - c_2)^{1/|\mathcal{A}|}}\right) + \max\left\{ 4\sqrt{mq_i \log n },  \log n\right\}$} 
\State \Return ``$p \neq q$''
\EndIf
\EndFor
\State $Z \leftarrow \frac{2}{m\a^2}\sum_{i \in \mathcal{A}} \frac{(N_i - mq_i)^2 - N_i}{m q_i}$
\State Let $T$ be the closest value to $Z$ which is contained in the interval $[0, 1]$
\State Sample $b \sim Bernoulli(T)$
\If {$b = 1$}
\State \Return ``$p \neq q$''
\Else
\State \Return ``$p = q$''
\EndIf
\end{algorithmic}
\end{algorithm}

\begin{prevproof}{Theorem}{thm:mainub}
We will prove the theorem for the case where $\b = 1/3$, the general case follows at the cost of a multiplicative $\log (1/\b)$ in the sample complexity from a standard amplification argument.
To be more precise, we can consider splitting our dataset into $O(\log (1/\b))$ sub-datasets and run the $\b = 1/3$ test on each one independently.
We return the majority result -- since each test is correct with probability $\geq 2/3$, correctness of the overall test follows by Chernoff bound.
It remains to argue privacy -- note that a neighboring dataset will only result in a single sub-dataset being changed.
Since we take the majority result, conditioning on the result of the other sub-tests, the result on this sub-dataset will either be irrelvant to or equal to the overall output.
In the former case, any test is private, and in the latter case, we know that the individual test is $\ve$-differentially private.
Overall privacy follows by applying the law of total probability.

We require the following two claims, which give bounds on the random variables $N_i$ and $Y_i$. 
Note that, due to the fact that we draw $Poisson(m)$ samples, each $N_i \sim Poisson(m p_i)$ independently.
\begin{claim}
\label{clm:laplacetail}
$|Y_i| \leq \frac{2}{c_2\ve}\log\left(\frac{1}{1 - (1 - c_2)^{1/|\mathcal{A}|}}\right)$ simultaneously for all $i \in \mathcal{A}$ with probability exactly $1 - c_2$.
\end{claim}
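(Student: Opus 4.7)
The plan is to directly compute the probability by using the CDF of the Laplace distribution and independence of the $Y_i$'s, since the threshold has clearly been reverse-engineered to make this work out to exactly $1 - c_2$.

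First I would recall that for a single $Y \sim \text{Laplace}(b)$ with scale parameter $b$, we have $\Pr[|Y| \leq t] = 1 - e^{-t/b}$ for all $t \geq 0$. In our case $b = 2/(c_2 \varepsilon)$, so
\begin{equation*}
\Pr\!\left[|Y_i| \leq t\right] = 1 - \exp\!\left(-\frac{c_2 \varepsilon}{2}\, t\right).
\end{equation*}
Next, I would plug in the specific threshold $t^* = \frac{2}{c_2\varepsilon}\log\!\left(\frac{1}{1-(1-c_2)^{1/|\mathcal{A}|}}\right)$. A direct calculation gives $\exp(-c_2\varepsilon t^*/2) = 1 - (1-c_2)^{1/|\mathcal{A}|}$, and therefore $\Pr[|Y_i| \leq t^*] = (1-c_2)^{1/|\mathcal{A}|}$.

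Finally, since the $Y_i$'s are mutually independent (they are freshly drawn i.i.d.\ Laplace noises in line 3 of Algorithm~\ref{alg:mainub}), the event $\{|Y_i| \leq t^*\}$ for all $i \in \mathcal{A}$ simultaneously has probability
\begin{equation*}
\prod_{i \in \mathcal{A}} \Pr\!\left[|Y_i| \leq t^*\right] = \left((1-c_2)^{1/|\mathcal{A}|}\right)^{|\mathcal{A}|} = 1 - c_2,
\end{equation*}
which establishes the claim exactly (not just as an inequality).

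There is no real obstacle here: the threshold $t^*$ was designed precisely so that a union-bound-style intersection of $|\mathcal{A}|$ independent Laplace tail events yields the clean value $1-c_2$, and the entire argument is a one-line application of the Laplace CDF together with independence.
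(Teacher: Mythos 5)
Your proof is correct and is essentially the same argument as the paper's: both compute the single-variable probability $(1-c_2)^{1/|\mathcal{A}|}$ from the Laplace tail (you via the CDF, the paper via the survival function of the folded Laplace, which are equivalent) and then multiply over the independent $Y_i$'s to get exactly $1-c_2$.
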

\begin{proof}
The survival function of the folded Laplace distribution with parameter $2/c_2\ve$ is $\exp\left(-c_2\ve x/2\right)$, and the probability that a sample from it exceeding the value 
$\frac{2}{c_2\ve}\log\left(\frac{1}{1 - (1 - c_2)^{1/|\mathcal{A}|}}\right)$
is equal to $1 - (1 - c_2)^{1/|\mathcal{A}|}$.
The probability that probability that it does not exceed this value is $(1 - c_2)^{1/|\mathcal{A}|}$, and since the $Y_i$'s are independent, the probability that none exceeds this value is $1 - c_2$, as desired.
\end{proof}

\begin{claim}
\label{clm:poissontail}
$|N_i - m p_i| \leq \max \left\{4\sqrt{m p_i \log n}, \log n\right\}$ simultaneously for all $i \in \mathcal{A}$ with probability at least $1 - \frac{2}{n^{0.84}} - \frac{1.1}{n}$.
\end{claim}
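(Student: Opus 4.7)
The plan is to exploit Poissonization and standard Chernoff/Bernstein-type tail bounds for Poisson random variables, then union bound. Because the algorithm draws $|S| \sim \text{Poisson}(m)$ samples, the counts $\{N_i\}_{i \in [n]}$ are independent with $N_i \sim \text{Poisson}(mp_i)$, so I can control each index separately. For each $i \in \mathcal{A}$ I will prove the single-coordinate tail bound
\[
\Pr\!\left[|N_i - mp_i| > \max\{4\sqrt{mp_i \log n},\,\log n\}\right] \;\leq\; \text{(small)},
\]
and then take a union bound over the at most $n$ indices in $\mathcal{A}$.

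To get a tight control of the right-hand side, I would split $\mathcal{A}$ into a ``heavy'' part, where $mp_i$ is at least a constant multiple of $\log n$, and a ``light'' part, where $mp_i$ is smaller. On the heavy part the operative threshold is $4\sqrt{mp_i \log n}$, and the standard sub-Gaussian/Bernstein tail $\Pr[|N_i - mp_i| \geq t\sqrt{mp_i}] \leq 2\exp(-\Theta(t^2))$ with $t = 4\sqrt{\log n}$ delivers per-coordinate probability of order $n^{-c}$ for a suitable $c>1$; the union bound over $|\mathcal{A}| \leq n$ coordinates produces a term of the form $2/n^{0.84}$ (the factor $2$ coming from the two-sided tail, and the exponent $0.84$ being the net exponent after the union). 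On the light part the threshold reduces to $\log n$, and since $N_i \geq 0$ only the upper tail matters; I would use the standard Poisson Chernoff bound $\Pr[N_i \geq k] \leq e^{-\lambda}(e\lambda/k)^k$ with $\lambda = mp_i$ and $k = mp_i + \log n \geq \log n$, which gives per-coordinate decay $\leq n^{-c'}$ for a $c' > 1$ strictly bigger than in the heavy regime (because the effective ``rate'' $k/\lambda$ is very large). Summing this over the at most $n$ light indices supplies the $1.1/n$ term.

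The main obstacle is purely arithmetic: choosing the split threshold and carrying the constants through the Bernstein exponent $t^2/(2(\lambda + t/3))$ so that the heavy regime yields exactly exponent $0.84$ and the light regime exactly the $1.1/n$ bound. Nothing is conceptually subtle — once one observes that the $\max$ in the claim is precisely tuned to switch from a Gaussian-like deviation scale $\sqrt{mp_i \log n}$ to the ``one-sided Poisson'' scale $\log n$ at the right place, both cases yield the same overall order via union bound. All other ingredients (Poissonization of the sample size, independence of the $N_i$'s, and standard textbook Poisson tail bounds) are routine.
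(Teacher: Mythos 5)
Your overall architecture matches the paper's: Poissonize so that the $N_i \sim \mathrm{Poisson}(mp_i)$ are independent, split $\mathcal{A}$ according to whether $\lambda_i = mp_i$ is large or small relative to $\log n$, bound each coordinate's deviation, and union-bound over the at most $n$ coordinates. Your light-case Chernoff bound $\Pr[N_i \geq k] \leq e^{-\lambda}(e\lambda/k)^k$ is fine and is essentially what the paper obtains from Klar's Proposition~1 via Stirling.

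The issue is the heavy case. You propose carrying constants through the Bernstein exponent $t^2/(2(\lambda + t/3))$, but that approximation is not sharp enough to reproduce $2/n^{0.84}$. The $\max$ in the claim switches branches at $\lambda = (\log n)/16$, so the minimum of the heavy-case exponent is attained near that boundary, where $x/\lambda = 4\sqrt{(\log n)/\lambda}$ is of order $16$. Plugging $\lambda = (\log n)/16$ and $x = 4\sqrt{\lambda\log n} = \log n$ into the Bernstein form gives
\[
\frac{x^2}{2(\lambda + x/3)} = \frac{(\log n)^2}{2\bigl((\log n)/16 + (\log n)/3\bigr)} = \frac{24}{19}\log n \approx 1.26\,\log n,
\]
i.e.\ a per-coordinate bound of about $2n^{-1.26}$ and hence $2n^{-0.26}$ after the union bound --- far short of $2/n^{0.84}$. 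The paper instead uses Bennett's inequality, whose correction factor $\psi(t) = \frac{(1+t)\log(1+t) - t}{t^2/2}$ satisfies $\psi \geq 0.23$ for the relevant range of $t$, yielding exponent $0.23 \cdot 8\log n = 1.84\log n$ per coordinate and thus $2/n^{0.84}$ after the union bound. The lossiness of Bernstein's $1/(1+t/3)$ approximation to $\psi$ at $t$ of order $16$ is exactly what you would lose; to make the constants come out you need Bennett (or equivalently the exact Poisson Chernoff bound $\exp(-\lambda h(x/\lambda))$ with $h(u) = (1+u)\log(1+u) - u$) in the heavy regime as well, not just the light one. If you only needed the failure probability to be $o(1)$ the Bernstein route would be enough, but the claim as stated has a concrete exponent that Bernstein does not deliver.

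Two smaller notes: (i) the paper actually splits at $\lambda = e^{-3}\log n$, which lies slightly below the $\max$ crossover at $(\log n)/16$; this is harmless because in between they prove the even stronger statement $|N_i - \lambda| \leq 4\sqrt{\lambda\log n} \leq \log n$. (ii) You correctly note that in the light regime only the upper tail matters since $N_i \geq 0$, which the paper uses implicitly through Klar's one-sided bound.
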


\begin{proof}
We consider this in two cases.
Let $X$ be a $Poisson(\l)$ random variable. 
First, assume that $\l \geq e^{-3} \log n$. 
By Bennett's inequality, we have the following tail bound \cite{Pollard15, Canonne17}:
$$\Pr\left[|X - \l| \geq x\right] \leq 2 \exp\left(-\frac{x^2}{2\l} \psi\left(\frac{x}{\l}\right)\right),$$
where 
$$\psi(t) = \frac{(1+t)\log(1+t) - t}{t^2/2}.$$
Consider $x = 4\sqrt{\l \log n}$.
At this point, we have
$$\psi(x/\l) = \psi(4\sqrt{\log n / \l}) \geq \psi(4e^{3/2}) \geq 0.23.$$
Thus, 
\begin{align*}
\Pr\left[|X - \l| \geq 4\sqrt{\l \log n}\right] &\leq 2 \exp\left(-0.23 \cdot 8\log n \right) \\
&\leq 2n^{-1.84}.
\end{align*}

Now, we focus on the other case, where $\l \leq e^{-3} \log n$. 
Here, we appeal to Proposition 1 of \cite{Klar00}, which implies the following via Stirling's approximation:
$$\Pr\left[|X - \l| \geq k\l \right] \leq \frac{k}{k-1} \exp(-\l + k \l - k \l \log k).$$
We set $k \l = \log n$, giving the upper bound
$$\frac{k}{k-1} n^{1 - \log k} \leq 1.1 \cdot n^{-2}.$$

We conclude by taking a union bound over $[n]$, with the argument for each $i \in [n]$ depending on whether $\lambda = m p_i$ is large or small.
\end{proof}

We proceed with proving the two desiderata of this algorithm, correctness and privacy.

\paragraph{Correctness.} 
We use the following two properties of the statistic $Z(D)$, which rely on the condition that $m = \Omega(\sqrt{n}/\a^2)$. 
The proofs of these properties are identical to the proofs of Lemma 2 and 3 in \cite{AcharyaDK15}, and are omitted. 
\begin{claim}
If $p = q$, then $\E[Z] = 0$.
If $\dtv(p,q) \geq \a$, then $\E[Z] \geq 1$.
\end{claim}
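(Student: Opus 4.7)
The plan is to evaluate $\E[Z]$ exactly via Poissonization, and then handle the two cases separately. Since $|S| \sim \mathrm{Poisson}(m)$, the counts $N_i$ are independent with $N_i \sim \mathrm{Poisson}(mp_i)$, so $\E[N_i] = \Var[N_i] = mp_i$. A direct expansion gives
\[
\E\!\left[(N_i - mq_i)^2 - N_i\right] \;=\; mp_i + (mp_i - mq_i)^2 - mp_i \;=\; m^2(p_i - q_i)^2,
\]
and summing over $i \in \mathcal{A}$ and dividing by $mq_i$ yields the clean identity
\[
\E[Z] \;=\; \frac{2}{\a^2} \sum_{i \in \mathcal{A}} \frac{(p_i - q_i)^2}{q_i}.
\]
The $p = q$ case follows at once. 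Observe that the hypothesis $m = \Omega(\sqrt{n}/\a^2)$ is not actually invoked here; it will enter only when bounding the variance of $Z$ in the subsequent claim.

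For the case $\dtv(p,q) \geq \a$, the plan is a Cauchy--Schwarz reduction to an $L_1$ quantity restricted to $\mathcal{A}$. Since $\sum_{i \in \mathcal{A}} q_i \leq 1$,
\[
\sum_{i \in \mathcal{A}} \frac{(p_i - q_i)^2}{q_i} \;\geq\; \left(\sum_{i \in \mathcal{A}} |p_i - q_i|\right)^{\!2},
\]
so it suffices to show that a constant fraction of the total variation mass lies inside $\mathcal{A}$, namely $\sum_{i \in \mathcal{A}} |p_i - q_i| \geq \a/\sqrt{2}$.

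The main step requiring care is showing that the ``light'' set $\bar{\mathcal{A}}$ cannot absorb too much of the $L_1$ deviation. Two ingredients drive this: (i) the definition of $\mathcal{A}$ together with $|\bar{\mathcal{A}}| \leq n$ gives $\sum_{i \in \bar{\mathcal{A}}} q_i < c_1 \a$; and (ii) the balance identity $\sum_{i \in \bar{\mathcal{A}}} p_i - \sum_{i \in \bar{\mathcal{A}}} q_i = \sum_{i \in \mathcal{A}} (q_i - p_i)$ (from $\sum p = \sum q = 1$) lets me trade $p$-mass on $\bar{\mathcal{A}}$ for $q$-mass on $\bar{\mathcal{A}}$ plus an $L_1$ deficit on $\mathcal{A}$. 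Chaining these,
\[
\sum_{i \in \bar{\mathcal{A}}} |p_i - q_i| \;\leq\; \sum_{i \in \bar{\mathcal{A}}} (p_i + q_i) \;\leq\; 2 \sum_{i \in \bar{\mathcal{A}}} q_i + \sum_{i \in \mathcal{A}} |p_i - q_i| \;\leq\; 2c_1 \a + \sum_{i \in \mathcal{A}} |p_i - q_i|.
\]
Combining with $2\a \leq \sum_i |p_i - q_i|$ and using $c_1 = 1/4$ yields $\sum_{i \in \mathcal{A}} |p_i - q_i| \geq (1 - c_1)\a = \tfrac{3}{4}\a$. Plugging back into the Cauchy--Schwarz bound gives $\E[Z] \geq \tfrac{2}{\a^2} \cdot (3\a/4)^2 = 9/8 \geq 1$, completing the claim.
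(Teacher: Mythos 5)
Your proof is correct, and it is the standard argument the paper has in mind: the paper itself omits the proof, deferring to Lemmas~2 and~3 of \cite{AcharyaDK15}, and what you have written is exactly that argument specialized to the statistic $Z$ used here. The Poissonized mean computation, the Cauchy--Schwarz step $\sum_{\mathcal{A}}(p_i-q_i)^2/q_i \geq \bigl(\sum_{\mathcal{A}}|p_i-q_i|\bigr)^2$, and the mass-balance bound $\sum_{\bar{\mathcal{A}}}|p_i-q_i| \leq 2\sum_{\bar{\mathcal{A}}}q_i + \sum_{\mathcal{A}}|p_i-q_i|$ leading to $\sum_{\mathcal{A}}|p_i-q_i| \geq (1-c_1)\a$ all check out, and with $c_1=1/4$ you get $\E[Z]\geq 9/8 \geq 1$. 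Your side remark that the hypothesis $m=\Omega(\sqrt{n}/\a^2)$ is not needed for this claim is also accurate: the paper states that hypothesis once for both claims about $Z$, but it only enters through the variance bound, not the expectation.
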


\begin{claim}
If $p = q$, then $\Var[Z] \leq 1/1000$.
If $\dtv(p,q) \geq \a$, then $\Var[Z] \leq 1/1000 \cdot \E[Z]^2$.
\end{claim}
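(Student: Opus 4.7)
The plan is to follow the standard Acharya--Daskalakis--Kamath variance analysis for the $\chi^2$-style statistic, now restricted to the large-bin alphabet $\mathcal{A}$. Since $|S|\sim \mathrm{Poisson}(m)$, the counts $N_i$ are mutually independent with $N_i\sim \mathrm{Poisson}(mp_i)$, so $\Var[Z]$ decomposes as the single sum $\frac{4}{m^2\a^4}\sum_{i\in\mathcal{A}}\Var[(N_i-\mu_i)^2-N_i]/\mu_i^2$ with $\lambda_i=mp_i$ and $\mu_i=mq_i$. A direct computation using the centered Poisson moments $\E[(N_i-\lambda_i)^2]=\E[(N_i-\lambda_i)^3]=\lambda_i$ and $\E[(N_i-\lambda_i)^4]=3\lambda_i^2+\lambda_i$ yields the per-bin identity $\Var[(N_i-\mu_i)^2-N_i]=2\lambda_i^2+4\lambda_i(\lambda_i-\mu_i)^2$.

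In the null case $p=q$ the cross term disappears, the sum collapses to $\Var[Z]=8|\mathcal{A}|/(m^2\a^4)\le 8n/(m^2\a^4)$, and this is at most $1/1000$ once $m\ge C\sqrt{n}/\a^2$ for a suitable absolute constant $C$---exactly the first term of the sample complexity in Theorem~\ref{thm:mainub}.

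The alternative case is where the real work lies: I must show $\Var[Z]\le \E[Z]^2/1000$ even though $Z$ is blind to the small bins $\bar{\mathcal{A}}$. The key step is a restricted $\chi^2$-vs-TV inequality: from $\dtv(p,q)\ge\a$ one must still derive $\sum_{i\in\mathcal{A}}(p_i-q_i)^2/q_i=\Omega(\a^2)$, and hence $\E[Z]=\Omega(1)$. I would prove this by noting that $q(\bar{\mathcal{A}})<c_1\a$ by construction of $\mathcal{A}$; consequently $\sum_{i\in\bar{\mathcal{A}}}|p_i-q_i|\le p(\bar{\mathcal{A}})+q(\bar{\mathcal{A}})$ can be $\Omega(\a)$ only when $p(\bar{\mathcal{A}})$ is $\Omega(\a)$, but then mass conservation forces $|\sum_{i\in\mathcal{A}}(p_i-q_i)|=\Omega(\a)$ as well, and a Cauchy--Schwarz step against $\sum_{i\in\mathcal{A}}q_i\le 1$ produces the required lower bound on the restricted chi-squared. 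Once $\E[Z]=\Omega(1)$ is in hand, I would split $\Var[Z]$ into its two constituent pieces: the leading $\sum 2\lambda_i^2/\mu_i^2$ piece is again $O(|\mathcal{A}|/(m^2\a^4))=O(1/1000)$ under the same sample complexity, while the cross piece $\sum 4\lambda_i(\lambda_i-\mu_i)^2/\mu_i^2$ is controlled by using $1/\mu_i\le n/(c_1 m\a)$ on $\mathcal{A}$ together with Cauchy--Schwarz against $\sum(\lambda_i-\mu_i)^2/\mu_i=(m\a^2/2)\E[Z]$, yielding a bound that is absorbed into $\E[Z]^2/1000$ whenever $m=\Omega(\sqrt{n}/\a^2)$. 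The subtlest ingredient is the restricted $\chi^2$-vs-TV inequality; everything else is bookkeeping driven by the $\sqrt{n}/\a^2$ term in the sample complexity.
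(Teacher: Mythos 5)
The paper gives no proof of this claim: it simply states that the argument is ``identical to the proofs of Lemma 2 and 3 in \cite{AcharyaDK15}'' and omits the details. Your plan --- exploit Poissonization to decompose $\Var[Z]$ bin-by-bin, compute the per-bin variance via centered Poisson moments (your identity $\Var[(N_i-\mu_i)^2-N_i]=2\lambda_i^2+4\lambda_i(\lambda_i-\mu_i)^2$ is correct), handle the null and alternative cases separately, and feed in the restricted $\chi^2$-vs-TV inequality afforded by the construction of $\mathcal{A}$ --- is exactly the ADK-style argument the authors intend, adapted to the truncated alphabet. Your sketch of the restricted $\chi^2$-vs-TV inequality (mass-conservation forcing $\Omega(\alpha)$ of the $\ell_1$ discrepancy onto $\mathcal{A}$, then Cauchy--Schwarz against $\sum_{i\in\mathcal{A}}q_i\le 1$) is also the right route and is what actually drives the companion claim that $\E[Z]\ge 1$.

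There is, however, one concrete error in the alternative case. You assert that the piece $\sum_{i\in\mathcal{A}} 2\lambda_i^2/\mu_i^2$ is ``again $O(|\mathcal{A}|/(m^2\alpha^4))$.'' That identification only holds under the null, where $\lambda_i=\mu_i$. Under an alternative, $\lambda_i^2/\mu_i^2=(p_i/q_i)^2$ can be as large as $\Theta(n^2/\alpha^2)$ on a single small bin of $\mathcal{A}$, so this sum is not uniformly $O(|\mathcal{A}|)$. The fix is the standard ADK expansion
$$\lambda_i^2 = \bigl(\mu_i+(\lambda_i-\mu_i)\bigr)^2 \le 2\mu_i^2 + 2(\lambda_i-\mu_i)^2,$$
which gives $\sum_{i\in\mathcal{A}}\lambda_i^2/\mu_i^2 \le 2|\mathcal{A}| + 2\sum_{i\in\mathcal{A}}(\lambda_i-\mu_i)^2/\mu_i^2$. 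The first summand is the $O(|\mathcal{A}|)$ term you already control; the second has exactly the shape of your cross piece and must be handled by the same $1/\mu_i \le n/(c_1 m\alpha)$ bound and absorbed into $\E[Z]^2/1000$ using $\sum_{i\in\mathcal{A}}(\lambda_i-\mu_i)^2/\mu_i = (m\alpha^2/2)\E[Z]$. Once that expansion is inserted, the bookkeeping closes as you describe, assuming $m=\Omega(\sqrt{n}/\alpha^2)$ with a sufficiently large implied constant.
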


First, we note that, by Claim \ref{clm:laplacetail}, the probability that we return in line 5 is exactly $c_2$. 
We now consider the case where $p = q$.
We note that by Claim \ref{clm:poissontail}, the probability that we output ``$p \neq q$'' in line 10 is $o(1)$, and thus negligible.
By Chebyshev's inequality, we get that $Z \leq 1/10$ with probability at least $9/10$, and we output ``$p = q$'' with probability at least $c_2/2 +  (1-c_2) \cdot (9/10 - c_2)^2 \geq 2/3$ (note that we subtract $c_2$ from $9/10$ since we are conditioning on an event with probability $1 - c_2$, and by union bound).
Similarly, when $\dtv(p,q) \geq \a$, Chebyshev's inequality gives that $Z \geq 9/10$ with probability at least $9/10$, and therefore we output ``$p \neq q$'' with probability at least $2/3$. 

\paragraph{Privacy.} 
We will prove $(0,c_2\ve/2)$-differential privacy.
By Claim~\ref{clm:laplacetail}, the probability that we return in line 5 is exactly $c_2$.
Thus the minimum probability of any output of the algorithm is at least $c_2/2$, and therefore $(0,c_2\ve/2)$-differential privacy implies $(\ve, 0)$-differential privacy.

We first consider the possibility of rejecting in line 11. 
Consider two neighboring datasets $D$ and $D'$, which differ by $1$ in the frequency of symbol $i$. 
Coupling the randomness of the $Y_j$'s on these two datasets, the only case in which the output differs is when $Y_i$ is such that the value of $|N_i + Y_i - mq_i|$ lies on opposite sides of the threshold for the two datasets.
Since $N_i$ differs by $1$ in the two datasets, and the probability mass assigned by the PDF of $Y_i$ to any interval of length $1$ is at most $c_2\ve/4$, the probability that the outputs differ is at most $c_2\ve/4$.
Therefore, this step is $(0,c_2\ve/4)$-differentially private.

We next consider the value of $Z$ for two neighboring datasets $D$ and $D'$, where $D'$ has one fewer occurrence of symbol $i$.
We only consider the case where we have not already returned in line 11, as otherwise the value of $Z$ is irrelevant for determining the output of the algorithm.
\begin{align*}
&Z(D) - Z(D') \\
&= \frac{1}{m\alpha^2} \biggr[ \frac{(N_i - mq_i)^2 - N_i}{mq_i} - \frac{(N_i - 1 - mq_i)^2 - (N_i - 1)}{mq_i} \biggr] \\
&= \frac{1}{m\alpha^2} \biggr[ \frac{(N_i - mq_i)^2 - N_i}{mq_i} - \frac{(N_i - mq_i)^2 - 2(N_i - mq_i) + 1 - N_i + 1}{mq_i} \biggr] \\
&= \frac{2(N_i - mq_i - 1)}{m^2\alpha^2q_i}.
\end{align*}

Since we did not return in line 11, 
\begin{align*}
|N_i - mq_i| &\leq \frac{4}{c_2\ve}\log\left(\frac{1}{1 - (1 - c_2)^{1/n}}\right) + \max\left\{4\sqrt{mq_i\log n}, \log n\right\} \\
&\leq \frac{4\log (n/c_2)}{c_2\ve} + \max\left\{4\sqrt{mq_i\log n}, \log n\right\} 
\end{align*}

This implies that 
\begin{align*}
|Z(D) - Z(D')| &= \frac{2|N_i - mq_i - 1|}{m^2\alpha^2q_i} \\
&\leq \frac{2}{m^2\a^2q_i}\left(\frac{6\log (n/c_2)}{c_2\ve} + 4\sqrt{mq_i\log n} \right).
\end{align*}

We will enforce that each of these terms are at most $c_2\ve/8$. 
\begin{align*}
\frac{12 \log (n/c_2)}{m^2\a^2 q_i c_2\ve} \leq \frac{c_2\ve}{8} &\Rightarrow m \geq \sqrt{\frac{96}{c_2^2c_1}}\frac{\sqrt{n \log (n/c_2)}}{\a^{1.5}\ve} \\
\frac{8\sqrt{\log n}}{m^{1.5}\a^2 \sqrt{q_i} } \leq \frac{c_2\ve}{8} &\Rightarrow m \geq \left(\frac{64}{c_2\sqrt{c_1}}\right)^{2/3}\frac{(n\log n)^{1/3}}{\a^{5/3}\ve^{2/3}}
\end{align*}

Since both terms are at most $c_2\ve/8$, this step is $(0,c_2\ve/4)$-differentially private.
Combining with the previous step gives the desired $(0,c_2\ve/2)$-differential privacy, and thus (as argued at the beginning of the privacy section of this proof) $\ve$-pure differential privacy.
\end{prevproof}

\section{Experiments}
\label{sec:experiments}
We performed an empirical evaluation of our algorithm, \texttt{Priv'IT}, on synthetic datasets.
All experiments were performed on a laptop computer with a 2.6 GHz Intel Core i7-6700HQ CPU and 8 GB of RAM.
Significant discussion is required to provide a full comparison with prior work in this area, since performance of the algorithms varies depending on the regime.

We compared our algorithm with two recent algorithms for differentially private hypothesis testing:
\begin{enumerate}
\item The Monte Carlo Goodness of fit test with Laplace noise from \cite{GaboardiLRV16}, \texttt{MCGOF};
\item The projected Goodness of Fit test from \cite{KiferR17}, \texttt{zCDP-GOF}.
\end{enumerate}

We note that we implemented a modified version of \texttt{Priv'IT}, which differs from Algorithm~\ref{alg:mainub} in lines 14 to 21.
In particular, we instead consider a statistic
$$Z = \sum_{i \in \mathcal{A}} \frac{(N_i - m q_i)^2 - N_i}{mq_i}.$$ 
We add Laplace noise to $Z$, with scale parameter $\Theta(\Delta/\ve)$, where $\Delta$ is the sensitivity of $Z$, which guarantees $(\ve/2, 0)$-differential privacy.
Then, similar to the other algorithms, we choose a threshold for this noised statistic such that we have the desired type I error.
This algorithm can be analyzed to provide identical theoretical guarantees as Algorithm~\ref{alg:mainub}, but with the practical advantage that there are fewer parameters to tune. 

To begin our experimental evaluation, we started with uniformity testing.
Our experimental setup was as follows.
The algorithms were provided $q$ as the uniform distribution over $[n]$.
The algorithms were also provided with samples from some distribution $p$.
This (unknown) $p$ was $q$ for the case $p = q$, or a distribution which we call the ``Paninski construction'' for the case $\dtv(p,q) \geq \a$. 
The Paninski construction is a distribution where half the elements of the support have mass $(1+\a)/n$ and half have mass $(1-\a)/n$. 
We use this name for the construction as \cite{Paninski08} showed that this example is one of the hardest to distinguish from uniform: one requires $\Omega(\sqrt{n}/\a^2)$ samples to (non-privately) distinguish a random permutation of this construction from the uniform distribution.
We fixed parameters $\ve = 0.1$ and $\a = 0.1$.
In addition, recall that Proposition \ref{prop:zcdp} implies that pure differential privacy (the privacy guaranteed by \texttt{Priv'IT}) is stronger than zCDP (the privacy guaranteed by \texttt{zCDP-GOF}).
In particular, our guarantee of $\ve$-pure differential privacy implies $\ve^2/2$-zCDP.
As a result, we ran \texttt{zCDP-GOF} with a privacy parameter of $0.005$-zCDP, which is equivalent to the amount of zCDP our algorithm provides.
Our experiments were conducted on a number of different support sizes $n$, ranging from $10$ to $10600$.
For each $n$, we ran the testing algorithms with increasing sample sizes $m$ in order to discover the minimum sample size when the type I and type II errors were both empirically below $1/3$.
To determine these empirical error rates, we ran all algorithms 1000 times for each $n$ and $m$, and recorded the fraction of the time each algorithm was correct.
As the other algorithms take a parameter $\b_{\rm I}$ as a target type I error, we input $1/3$ as this parameter.

The results of our first test are provided in Figure~\ref{fig:test1}.
The x-axis indicates the support size, and the y-axis indicates the minimum number of samples required.
We plot three lines, which demonstrate the empirical number of samples required to obtain $1/3$ type I and type II error for the different algorithms.
We can see that in this case, \texttt{zCDP-GOF} is the most statistically efficient, followed by \texttt{MCGOF} and \texttt{Priv'IT}. 
\begin{figure}
\centering
\includegraphics[scale=0.45]{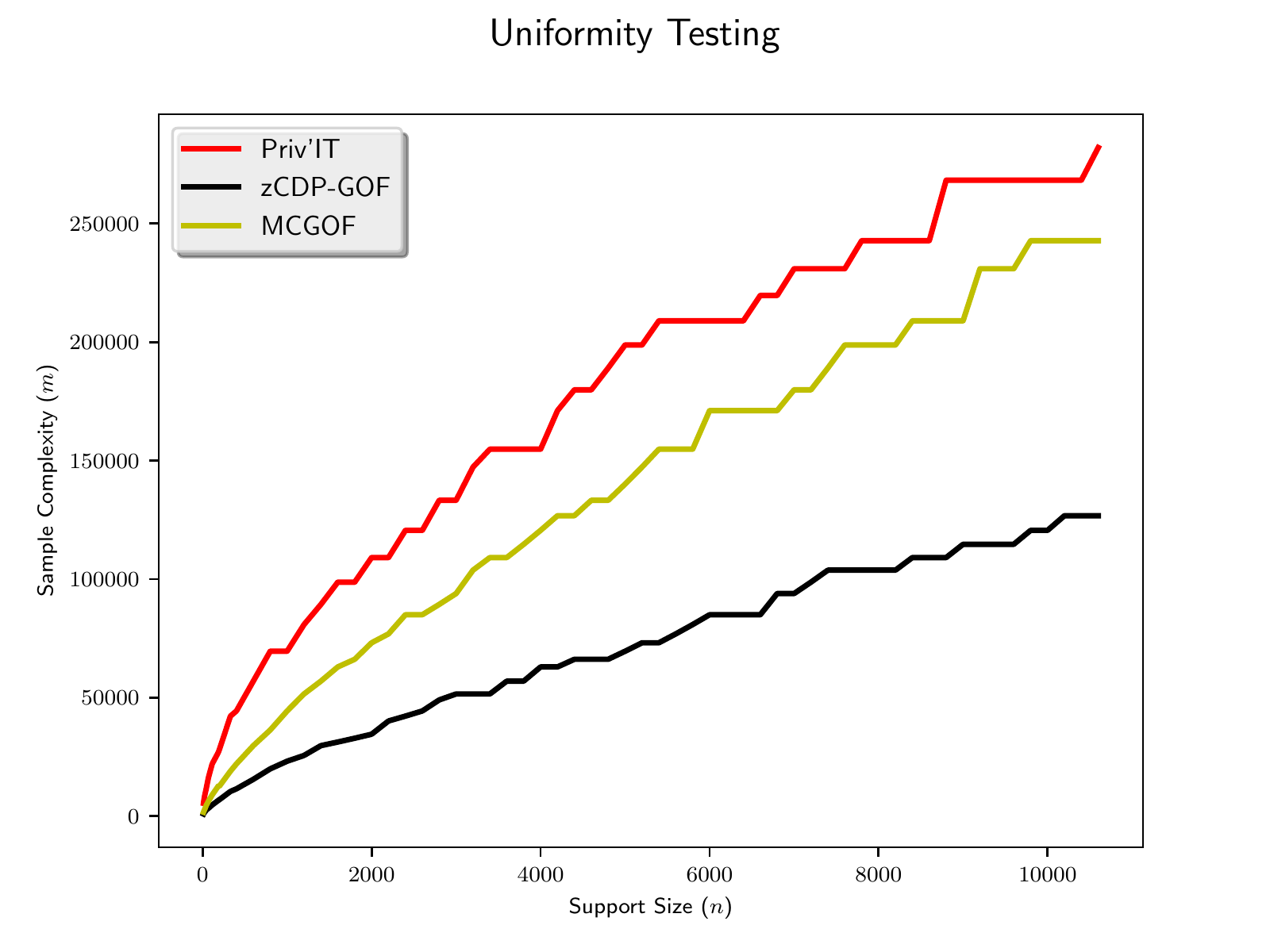}
\caption{The sample complexities of \texttt{Priv'IT}, \texttt{MCGOF}, and \texttt{zCDP-GOF} for uniformity testing}
\label{fig:test1}
\end{figure}

To explain this difference in statistical efficiency, we note that the theoretical guarantees of \texttt{Priv'IT} imply that it performs well even when data is sparsely sampled.
More precisely, one of the benefits of our tester is that it can reduce the variance induced by elements whose expected number of occurrences is less than 1.
Since none of these testers reach this regime (i.e., even \texttt{zCDP-GOF} at $n = 10000$ expects to see each element 10 times), we do not reap the benefits of \texttt{Priv'IT}.
Ideally, we would run these algorithms on the uniform distribution at sufficiently large support sizes.
However, since this is prohibitively expensive to do with thousands of repetitions (for any of these methods), we instead demonstrate the advantages of our tester on a different distribution.

Our second test is conducted with $q$ being a $2$-histogram\footnote{A $k$-histogram is a distribution where the domain can be partitioned into $k$ intervals such that the distribution is uniform over each interval.}, where all but a vanishing fraction of the probability mass is concentrated on a small, constant fraction of the support\footnote{In particular, in Figure~\ref{fig:test3}, $n/200$ support elements contained $1 - 10/n$ probability mass, but similar trends hold with modifications of these parameters.}.
This serves as our proxy for a very large support, since now we will have elements which have a sub-constant expected number of occurrences. 
The algorithms are provided with samples from a distribution $p$, which is either $q$ or a similar Paninski construction as before, where the total variation distance from $q$ is placed on the support elements containing non-negligible mass.
We ran the test on support sizes $n$ ranging from $10$ to $6800$.
All other parameters are the same as in the previous test.

The results of our second test are provided in Figure~\ref{fig:test2}.
In this case, we compare \texttt{Priv'IT} and \texttt{zCDP-GOF}, and note that our test is slightly better for all support sizes $n$, though the difference can be pronounced or diminished depending on the construction of the distribution $q$.
We found that \texttt{MCGOF} was incredibly inefficient on this construction -- even for $n = 400$ it required $130000$ samples, which is a factor of 10 worse than \texttt{zCDP-GOF} on a support of size $n = 6800$.
To explain this phenomenon, we can inspect the contribution of a single domain element $i$ to their statistic:
$$\frac{(N_i + Y_i - mq_i)^2}{mq_i}.$$
In the case where $mq_i \ll 1$ and $p = q$, this is approximately equal to $\frac{Y_i^2}{mq_i}$.
The standard deviation of this term will be of the order $\frac{1}{mq_i\ve^2}$, which can be made arbitrarily large as $mq_i \rightarrow 0$.
While \texttt{zCDP-GOF} may naively seem susceptible to this same pitfall, their projection method appears to elegantly avoid it.
\begin{figure}
\centering
\includegraphics[scale=0.45]{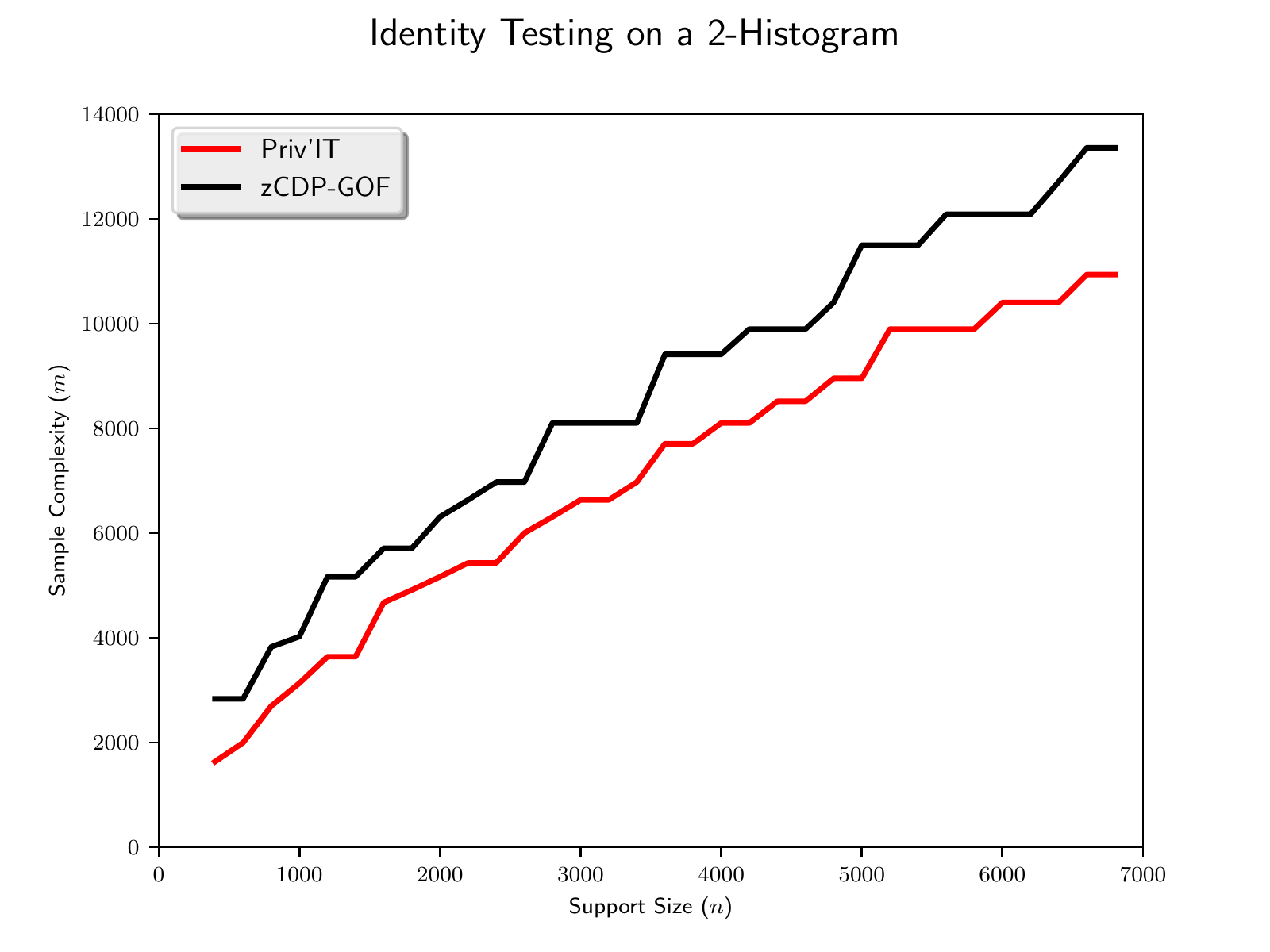}
\caption{The sample complexities of \texttt{Priv'IT} and \texttt{zCDP-GOF} for identity testing on a $2$-histogram}
\label{fig:test2}
\end{figure}

As a final test, we note that \texttt{zCDP-GOF} guarantees zCDP, while \texttt{Priv'IT} guarantees (vanilla) differential privacy.
In our previous tests, our guarantee was $\ve$-differential privacy, while theirs was $\frac{\ve^2}{2}$-zCDP: by Proposition~\ref{prop:zcdp}, our guarantees imply theirs.
In the third test, we revisit uniformity testing, but when \emph{their guarantees imply ours}.
More specifically, again with $\ve = 0.1$, we ran \texttt{zCDP-GOF} with the guarantee of $\frac{\ve^2}{2}$-zCDP and \texttt{Priv'IT} with the guarantee of $(\frac{\ve^2}{2} + \ve\sqrt{2\log(1/\d)}, \d)$ for various $\d > 0$.
We note that $\d$ is often thought in theory to be ``cryptographically small'' (such as $2^{-100}$), but we compare with a wide range of $\d$, both large and small: $\d = 1/e^t$ for $t \in \{1, 2, 4, 8, 16\}$.
This test was conducted on support sizes $n$ ranging from $10$ to $6000$.

The results of our third test are provided in Figure~\ref{fig:test3}.
We found that, for all $\d$ tested, \texttt{Priv'IT} required fewer samples than \texttt{zCDP-GOF}.
This is unsurprising for $\d$ very large and small, since the differential privacy guarantees become very easy to satisfy, but we found it to be true for even ``moderate'' values of $\d$.
This implies that if an analyst is satisfied with approximate differential privacy, she might be better off using \texttt{Priv'IT}, rather than an algorithm which guarantees zCDP. 
\begin{figure}
\centering
\includegraphics[scale=0.45]{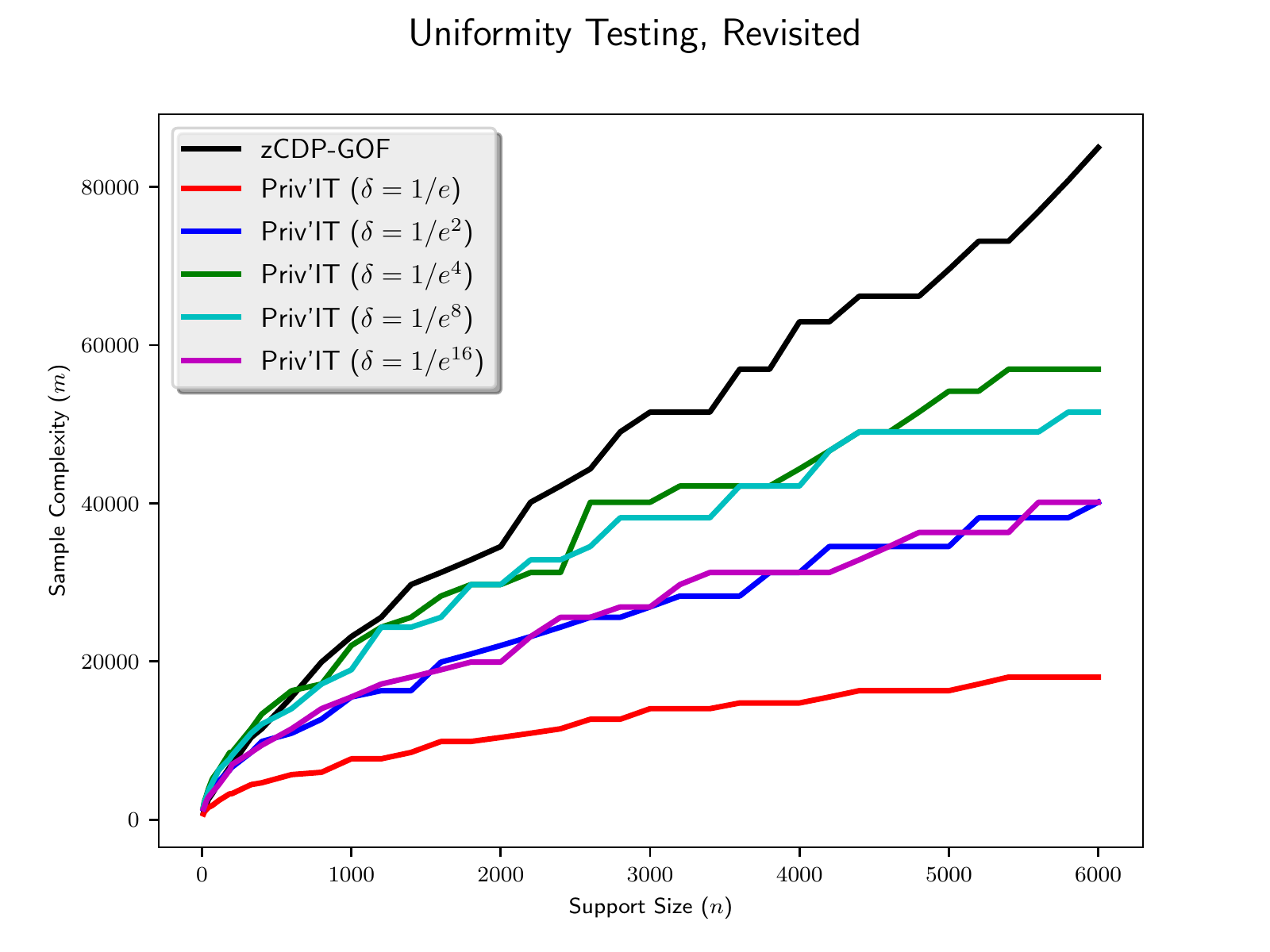}
\caption{The sample complexities of \texttt{Priv'IT} and \texttt{zCDP-GOF} for uniformity testing, with approximate differential privacy}
\label{fig:test3}
\end{figure}

While the main focus of our evaluation was statistical in nature, we will note that \texttt{Priv'IT} was more efficient in runtime than our implementation of \texttt{MCGOF}, and more efficient in memory usage than our implementation of \texttt{zCDP-GOF}.
The former point was observed by noting that, in the same amount of time, \texttt{Priv'IT} was able to reach a trial corresponding to a support size of $20000$, while \texttt{MCGOF} was only able to reach $10000$.
The latter point was observed by noting that \texttt{zCDP-GOF} ran out of memory at a support size of $11800$. 
This is likely because \texttt{zCDP-GOF} requires matrix computations on a matrix of size $O(n^2)$.  
It is plausible that all of these implementations could be made more time and memory efficient, but we found our implementations to be sufficient for the sake of our comparison.

\section*{Acknowledgments}
The authors would like to thank Jon Ullman for helpful discussions in the early stages of this work.
The authors were supported by NSF CCF-1551875, CCF-1617730, CCF-1650733, and ONR N00014-12-1-0999.
\bibliographystyle{alpha}
\bibliography{biblio}
\end{document}